\newcommand{\LeftComment}[1]{$\triangleright$ #1}
\newcommand{\cO}{\mathcal{O}}
\def\dd{\mathinner{.\,.}}
\renewcommand{\Alph}{\mathit{Alph}}
\newcommand{\first}{\mathit{first}}
\newcommand{\last}{\mathit{last}}
\newcommand{\PRED}{\mathit{PRED}\xspace}
\newcommand{\Pref}{\mathit{Pref}\xspace}
\newcommand{\FirstLex}{\mathit{FirstOcc}\xspace}
\newcommand{\LastLex}{\mathit{LastOcc}\xspace}
\newcommand{\shrink}{\mathit{shrink}\xspace}
\newcommand{\falafel}{\mathsf{FaLaFeL}\xspace}
\newcommand{\D}{FL}%{\mathbf{D}}
\newtheorem{theorem}{Theorem}[section]
\newtheorem{fact}[theorem]{Fact}
\newtheorem{lemma}[theorem]{Lemma}
\newtheorem{observation}[theorem]{Observation}
\newtheorem{corollary}[theorem]{Corollary}
\newtheorem{claim}[theorem]{Claim}
\theoremstyle{remark}
\newtheorem{definition}[theorem]{Definition}
\newtheorem{remark}[theorem]{Remark}
\newtheorem{example}[theorem]{Example}
\author[1]{Panagiotis Charalampopoulos}
\author[2,3]{Solon P.\ Pissis}
\author[4]{Jakub Radoszewski}
\author[4]{Wojciech Rytter}
\author[4]{Tomasz Wale\'n}
\author[2,4]{Wiktor Zuba}
\affil[1]{King's College London, UK\\\texttt{p.charalampopoulos@kcl.ac.uk}}
\affil[2]{CWI, Amsterdam, The Netherlands\\\texttt{solon.pissis@cwi.nl}}
\affil[3]{Vrije Universiteit, Amsterdam, The Netherlands}
\affil[4]{Institute of Informatics, University of Warsaw, Warsaw, Poland\\\texttt{[jrad,rytter,walen,w.zuba]@mimuw.edu.pl}}
\title{Subsequence Covers of Words}
\date{\vspace{-.5cm}}
\begin{document}

\maketitle

\begin{abstract}
We introduce subsequence covers (s-covers, in short), a new type of covers of a word. A word $C$ is an \emph{s-cover} of a word $S$ if the occurrences of $C$ in $S$ as subsequences cover all the positions in $S$.

The s-covers seem to be computationally much harder than standard covers of words (cf.\ Apostolico et al., \emph{Inf.\ Process.\ Lett.} 1991), but, on the other hand, much easier than the related shuffle powers (Warmuth and Haussler, \emph{J.\ Comput.\ Syst.\ Sci.} 1984).

We give a linear-time algorithm for testing if a candidate word $C$ is an s-cover of a word $S$ over a polynomially-bounded integer alphabet. We also give an algorithm for finding a shortest s-cover of a word $S$, which in the case of a constant-sized alphabet, also runs in linear time. 

The words without proper s-cover are called s-primitive.
We complement our algorithmic results with explicit  lower and an upper bound on the length of a longest s-primitive word.  Both bounds are exponential in the size of the alphabet. The upper bound presented here 
improves the bound given in the conference version of this paper [SPIRE 2022].

\medskip\noindent\textbf{Keywords:} Text algorithms, Combinatorics on words,  Covers, Shuffle powers, Subsequence
\end{abstract}

\section{Introduction} 

The problem of computing covers in a word is a classic one in string algorithms; see 
\cite{DBLP:journals/ipl/ApostolicoFI91,DBLP:journals/ipl/Breslauer92,DBLP:journals/ipl/MooreS95} and also~\cite{DBLP:journals/tcs/CzajkaR21,DBLP:journals/fuin/MhaskarS22} for recent surveys. In its most basic type, a word $C$ is said to be a \emph{cover} of a word $S$ if every position of $S$ lies within some occurrence of $C$ as a factor (subword) in $S$~\cite{DBLP:journals/ipl/ApostolicoFI91}.

In this paper, we introduce a new type of cover, in which instead of factors we take subsequences (scattered factors). Such covers turn out to be related to shuffle problems~\cite{DBLP:journals/jcss/WarmuthH84,DBLP:conf/csr/RizziV13,DBLP:journals/jcss/BussS14}.
Formally the new type of cover is defined as follows:

\begin{definition}
A word $C$ is a {\bf subsequence cover} (s-cover, in short) of a word $S$ if every position in $S$ belongs to an occurrence of $C$ as a subsequence in $S$. We also write $S\in C^{\,\otimes}$, where $C^{\,\otimes}$
is the set of words having $C$ as an s-cover.
\end{definition}

We say that an s-cover $C$ of a word $S$ is \emph{non-trivial} if $|C|<|S|$. A word $S$ is called \emph{s-primitive} if it has no non-trivial s-cover.

\begin{remark}
An example of an s-primitive word is the 
Zimin word $Z_k$~\cite{lothaire_2002}, that is, a word over alphabet $\{1,\ldots,k\}$ given by recurrences 
\begin{equation}\label{zimin}
Z_1=1,\ \ Z_i=Z_{i-1}iZ_{i-1}\  \text{for}\ i \in \{2,\ldots,k\}.
\end{equation}
The word $Z_k$ has length $2^k-1$. The fact that it is s-primitive can be shown by simple induction.
\end{remark}

Clearly, if a word $C$ is a (standard) cover of a word $S$, then $C$ is an s-cover of~$S$.
However, the converse implication is false: $ab$ is an s-cover of $aab$, but is not a standard cover.
For another example of an s-cover, see below.

\begin{example}
\cref{ex:simple-s-cover} shows that $C=abcab$ is an s-cover of $S=abcbacab$. In fact $C$ is a shortest s-cover of $S$.
\begin{figure}[ht]
\centering
\begin{tikzpicture}[xscale=0.35]
    
    \foreach \i/\c in {1/a,2/b,3/c,5/a,8/b}{
        \draw (\i,1.5) node[above] {$\textcolor{violet}{\c}$};
    }
    \foreach \i/\c in {1/a,4/b,6/c,7/a,8/b}{
        \draw (\i,1) node[above] {$\textcolor{red}{\c}$};
    }
    \foreach \i/\c in {1/a,2/b,3/c,7/a,8/b}{
        \draw (\i,0.5) node[above] {$\textcolor{blue}{\c}$};
    }
    \foreach \i/\c in {1/a,2/b,3/c,4/b,5/a,6/c,7/a,8/b}{
        \draw (\i,0) node[above] {$\textcolor{black}{\bold \c}$};
    }
\end{tikzpicture}
    \caption{An illustration of the fact that $C=abcab$ is an s-cover of $S=abcbacab$.}
    \label{ex:simple-s-cover}
\end{figure}
\end{example}

We now provide some basic definitions and notation.
An \emph{alphabet} is a finite nonempty set of elements called \emph{letters}.
A \emph{word} $S$ is a sequence of letters over some alphabet.
For a word $S$, by $|S|$ we denote its \emph{length}, by $S[i]$, for $i=0,\ldots,|S|-1$, we denote its $i$th letter, and by $\Alph(S)$ we denote the set of letters in $S$, i.e., $\{S[0],\ldots,S[|S|-1]\}$.
The \emph{empty word} is the word of length 0.

For any two words $U$ and $V$, by $U\cdot V = UV$ we denote their concatenation. For a word $S=PUQ$, where $P$, $U$, and $Q$ are words, $U$ is called a \emph{factor} of $S$; it is called a \emph{prefix} (resp.~\emph{suffix}) if $P$ (resp.~$Q$) is the empty word.
By $S[i\dd j]=S[i \dd j+1)=S(i-1 \dd j]$ we denote a factor $S[i] \dots S[j]$ of $S$; we omit $i$ if $i=0$ and $j$ if $j=|S|-1$.

A word $V$ is a \emph{$k$-power} of a word $U$, for integer $k \ge 0$, if $V$ is a concatenation of $k$ copies of $U$, in which case we denote it by $U^k$. It is called a \emph{square} if $k=2$.

\begin{remark}\label{SLP} If a word contains a factor which is not s-primitive, 
then the whole word is not s-primitive. It is easy to see that any gapped repeat $UVU$ (see \cite{DBLP:conf/cpm/KolpakovPPK14}), where $\Alph(V) \subseteq \Alph(U)$, is not s-primitive and $U$ is its non-trivial s-cover. This is the case because every position of $V$ can be covered by an occurrence whose prefix belongs to the first copy of $U$ and its suffix to the second copy of $U$ with only the one position of $V$ used.
In particular nontrivial squares $UU$ are of this type.
Each word containing a factor of this type is not s-primitive.
% WZ: This is so long because we added the argument asked by Reviewer 1 and the majority of authors agrees with adding this change.
\begin{comment}
If a word $S$ contains a non-empty square factor $U^2$, then $S$ has a non-trivial s-cover resulting by removing any of the two consecutive copies of $U$. Further, if a word $S$ has a factor being a gapped repeat $UVU$ (see \cite{DBLP:conf/cpm/KolpakovPPK14}), such that $\Alph(V) \subseteq \Alph(U)$, then $S$ has a non-trivial s-cover obtained by deleting from $S$ the suffix $VU$ of this gapped repeat. %\textcolor{red}
{This is the case because $U$ is an s-cover of $UVU$ (every position of $V$ can be covered by an occurrence whose prefix belongs to the first copy of $U$ and its suffix to the second copy of $U$ with only the one position of $V$ used).} 
Moreover, if $C$ is an s-cover of $S$, then $C$ is an s-cover of $S$ concatenated with any concatenation of suffixes of $C$.
\end{comment}
\end{remark}

A different version of covers, where we require that position-subsequences\footnote{
%\textcolor{red}
{By a \emph{position-subsequence} we mean the sorted set of positions that witness the occurrence of the subsequence.}} are disjoint, is the \emph{shuffle closure} problem. The shuffle closure of a word $U$, denoted by $U^{\odot}$, is the set of words resulting by interleaving many copies of $U$; see~\cite{DBLP:journals/jcss/WarmuthH84}. The words in $U^{\odot}$ are sometimes called \emph{shuffle powers} of $U$.

\medskip
The following problems are NP-hard for constant-sized alphabets:
\begin{enumerate}[(1)]
    \item\label{it1} Given two words $U$ and $S$, test if $S\in U^{\odot}$; see~\cite{DBLP:journals/jcss/WarmuthH84}.
    \item\label{it2} Given a word $S$, check if there exists a word $U$
    such that $|U|=|S|/2$ and $S\in U^{\odot}$ (this was originally
    called the \emph{shuffle square} problem); see \cite{DBLP:journals/jcss/BussS14}. An NP-hardness proof for a binary alphabet was recently given in \cite{DBLP:journals/tcs/BulteauV20}.
    \item\label{it3} Given a word $S$, find a shortest word $U$ such that $S\in U^{\odot}$; its hardness is trivially reduced from \eqref{it2}.
\end{enumerate}

The following observation links s-covers and shuffle closures.

\begin{observation}
Let $S$ be a word of length $n$. Then
\[S\in C^{\,\otimes}\Rightarrow \exists \, r_0,r_1\ldots, r_{n-1} \in \mathbb{Z}_+ :\;
S[0]^{r_0}S[1]^{r_1}\ldots S[n-1]^{r_{n-1}}\in C^{\,\odot}.\]
\end{observation}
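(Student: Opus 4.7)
The plan is to upgrade the covering subsequence-occurrences of $C$ in $S$, which may share positions, to \emph{disjoint} subsequence-occurrences of $C$ in a suitably inflated word $S'$ obtained by duplicating each letter of $S$ as many times as it is used by the cover. The duplication multiplicities will be exactly the desired $r_0,\ldots,r_{n-1}$.

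More concretely, suppose $S\in C^{\,\otimes}$ and fix an arbitrary family of subsequence-occurrences $P_1,\ldots,P_m \subseteq \{0,\ldots,n-1\}$ of $C$ in $S$ whose union is $\{0,\ldots,n-1\}$; each $P_j$ is an increasing sequence of $|C|$ positions spelling $C$. For $i\in\{0,\ldots,n-1\}$ define
\[
r_i \;=\; \bigl|\{j : i\in P_j\}\bigr|.
\]
Since every position of $S$ is covered by at least one occurrence, $r_i\ge 1$, so $r_i\in\mathbb{Z}_+$. Let $S' = S[0]^{r_0}S[1]^{r_1}\cdots S[n-1]^{r_{n-1}}$, so $|S'|=\sum_i r_i=\sum_j |P_j|=m|C|$.

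The remaining step is to exhibit a partition of the positions of $S'$ into $m$ pairwise disjoint position-subsequences, each spelling $C$; this gives $S'\in C^{\,\odot}$. For each $i$, arbitrarily biject the $r_i$ occurrences of $S[i]$ inside the block $S[i]^{r_i}$ of $S'$ with the $r_i$ indices $\{j : i\in P_j\}$; this assigns to each $j$ and each $i\in P_j$ one distinct copy of $S[i]$ in $S'$. For a fixed $j$ with $P_j=\{i_1<i_2<\cdots<i_{|C|}\}$, the assigned copies sit in the consecutive blocks $S[i_1]^{r_{i_1}},\ldots,S[i_{|C|}]^{r_{i_{|C|}}}$, which appear left-to-right in $S'$, so reading them in order yields $S[i_1]S[i_2]\cdots S[i_{|C|}]=C$. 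The assignment is a bijection on positions of $S'$ by construction, so the $m$ groups are pairwise disjoint.

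No step is genuinely difficult; the content is entirely bookkeeping, and the only thing to check carefully is that the chosen copies for each $j$ really interleave in the correct left-to-right order inside $S'$, which is immediate from the fact that different positions of $S$ expand into disjoint consecutive blocks of $S'$.
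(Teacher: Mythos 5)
Your proof is correct and follows essentially the same route as the paper's: define $r_i$ as the number of covering occurrences using position $i$, inflate $S$ accordingly, and redistribute the copies of each position among the occurrences that used it to obtain disjoint occurrences of $C$ partitioning the inflated word. The only cosmetic difference is that you make the block-by-block bijection and the left-to-right ordering check slightly more explicit.
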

\begin{proof}
Word $C$ is an s-cover of word $S$, so there exist position-subsequences $I_1,\ldots,I_k$ of length $|C|$ in $S$ that correspond to occurrences of $C$ and such that $I_1 \cup I_2 \cup \cdots \cup I_k = \{0,\ldots,n-1\}$. For $i \in \{0,\ldots,n-1\}$, we define $r_i$ as the number of occurrences of $i$ in the position-subsequences. Then, $S':=S[0]^{r_0}S[1]^{r_1}\ldots S[n-1]^{r_{n-1}}$ is a shuffle power of $C$.

To see that this holds, we transform the position-subsequences $I_1,\ldots,I_k$ into disjoint position-subsequences in $S'$ by performing the following operation simultaneously for all $i \in \{0,\ldots,n-1\}$: Replace in any way the occurrences of $i$ in $I_1,\ldots,I_k$ by distinct numbers in $\{s,\ldots,s+r_i-1\}$, where $s=r_0+\cdots+r_{i-1}$. By the definition of $r_i$, the resulting position-subsequences $I'_1,\ldots,I'_k$ in $S'$ are disjoint, satisfy $I'_1 \cup \cdots \cup I'_k = \{0,\ldots,|S'|-1\}$ and each of them corresponds to an occurrence of $C$ in $S'$.
\end{proof}

In this paper, we show that problems similar to \eqref{it1} and \eqref{it3} for s-covers, when we replace $\odot$ by $\,\otimes$, are
tractable. Notably, the first one is solved in linear time for any polynomially-bounded integer alphabet; and the last one in linear time for any constant-sized alphabet.
  
\paragraph{Our results and paper organization}
\begin{itemize}
    \item In \cref{sec:algo}, we present a linear-time algorithm for checking if a word $C$ is an s-cover of a word $S$, assuming that $S$ is over a polynomially-bounded integer alphabet $\{0,\ldots,|S|^{\cO(1)}\}$. We also discuss why an equally efficient algorithm for this problem without this assumption is unlikely.\vspace*{2mm}
    \item Let $\gamma(k)$ denote the length of a longest s-primitive word over an alphabet of size $k$. In \cref{sec:gamma,subsec:gen0,subsec:gen}, we present general bounds on this function as well as its particular values for  small values of $k$. \cref{sec:matching_lemma} contains a proof of an auxiliary lemma.
    \vspace*{2mm}
    \item In \cref{sec:slow_algo}, we show that computing a non-trivial s-cover is fixed parameter tractable for parameter $k=|\Alph(S)|$. In particular, we obtain a linear-time algorithm for computing a shortest s-cover of a word over a constant-sized alphabet.\vspace*{2mm}
    \item Finally, in \cref{sec:useless}, we explore properties of s-covers that are significantly different from properties of standard covers. In particular, we show that a word can have exponentially many different shortest s-covers.
    %, which implies that computing all shortest s-covers of a word (over a %super-constant alphabet) requires exponential %time.\footnote{\textcolor{red}{This statement assumes a non-compacted %representation of the output.}}
\end{itemize} 
We conclude in \cref{sec:concl} with a summary of our results and a list of open problems.

\section{Testing a candidate s-cover}\label{sec:algo}

Let us consider words $C=C[0\dd m-1]$ and $S=S[0\dd n-1]$. We would like to check
whether $C$ is an s-cover of $S$.

Let sequences $\FirstLex$ and $\LastLex$ be the 
lexicographically first and last 
position-subsequences of $S$ containing $C$. We assume that $\FirstLex[0]=0$ and $\LastLex[m-1]=n-1$. 
If there are no such
subsequences of positions then $C$ is not an s-cover, so we assume 
they exist and are well defined. We set a sentinel $\LastLex[m]=n$.

For all $i\in \{0,\dots, n-1\}$, we define
$$\Pref[i]= \max(\{j\,:\, \FirstLex[j]\le i\ \land\ S[\FirstLex[j]]=S[i]\}\cup\{-1\}).
$$
See also~\cref{fig:example_algo}.
Intuitively, if position $i$ is in any subsequence occurrence of $C$ in $S$, then there is a subsequence occurrence of $C$ in $S$ that consists of the prefix of $\FirstLex$ of length $\Pref[i]$, position $i$ and an appropriate suffix of $\LastLex$.
All we have to do is check, for all $i$, whether such a pair of prefix and suffix exists.
See \cref{fig:correctness} for an illustration of the argument and \cref{lem:correctness} for a formal statement of the condition that needs to be satisfied.

\begin{figure}[!ht]
    %% generated by python s_cover_tool.py -s abacbacab  -c abca
    %% string=abacbacab n=9
%% cover=abcab m=5
%% first_occ=[0, 1, 3, 5, 8]
%% last_occ=[2, 4, 6, 7, 8]
%% right=[1, 1, 2, 2, 3, 3, 4, 5, 6]
%% pref=[1, 2, 1, 3, 2, 4, 3, 4, 5]
\centering
\setlength\tabcolsep{0.4em}
%% first & last for string=abacbacab cover=abcab
\begin{tikzpicture}
\begin{scope}[xshift=0cm]
\node at (2.25,0.5) [above] {$\FirstLex$};
\node at (0.0,0) [above,blue] {\bfseries a};
\node at (0.0,0) [below,blue] {\scriptsize 0};
\node at (0.5,0) [above,blue] {\bfseries b};
\node at (0.5,0) [below,blue] {\scriptsize 1};
\node at (1.0,0) [above,] {$a$};
\node at (1.0,0) [below,] {\scriptsize 2};
\node at (1.5,0) [above,blue] {\bfseries c};
\node at (1.5,0) [below,blue] {\scriptsize 3};
\node at (2.0,0) [above,] {$b$};
\node at (2.0,0) [below,] {\scriptsize 4};
\node at (2.5,0) [above,blue] {\bfseries a};
\node at (2.5,0) [below,blue] {\scriptsize 5};
\node at (3.0,0) [above,] {$c$};
\node at (3.0,0) [below,] {\scriptsize 6};
\node at (3.5,0) [above,] {$a$};
\node at (3.5,0) [below,] {\scriptsize 7};
\node at (4.0,0) [above,blue] {\bfseries b};
\node at (4.0,0) [below,blue] {\scriptsize 8};
\end{scope}
\begin{scope}[xshift=5.5cm]
\node at (2.25,0.5) [above] {$\LastLex$};
\node at (0.0,0) [above,] {$a$};
\node at (0.0,0) [below,] {\scriptsize 0};
\node at (0.5,0) [above,] {$b$};
\node at (0.5,0) [below,] {\scriptsize 1};
\node at (1.0,0) [above,blue] {\bfseries a};
\node at (1.0,0) [below,blue] {\scriptsize 2};
\node at (1.5,0) [above,] {$c$};
\node at (1.5,0) [below,] {\scriptsize 3};
\node at (2.0,0) [above,blue] {\bfseries b};
\node at (2.0,0) [below,blue] {\scriptsize 4};
\node at (2.5,0) [above,] {$a$};
\node at (2.5,0) [below,] {\scriptsize 5};
\node at (3.0,0) [above,blue] {\bfseries c};
\node at (3.0,0) [below,blue] {\scriptsize 6};
\node at (3.5,0) [above,blue] {\bfseries a};
\node at (3.5,0) [below,blue] {\scriptsize 7};
\node at (4.0,0) [above,blue] {\bfseries b};
\node at (4.0,0) [below,blue] {\scriptsize 8};
\end{scope}
\end{tikzpicture}

\medskip

\begin{tabular}{l|cccccc}
$i$ & \scriptsize 0 & \scriptsize 1 & \scriptsize 2 & \scriptsize 3 & \scriptsize 4 & \scriptsize 5 \\ \hline
$\FirstLex[i]$ & 0 & 1 & 3 & 5 & 8 & \\
$\LastLex[i]$ & 2 & 4 & 6 & 7 & 8 & 9 \\
\end{tabular}

\medskip

%% right & pref for string=abacbacab cover=abcab
\begin{tabular}{l|ccccccccc}
$i$ & \scriptsize 0 & \scriptsize 1 & \scriptsize 2 & \scriptsize 3 & \scriptsize 4 & \scriptsize 5 & \scriptsize 6 & \scriptsize 7 & \scriptsize 8 \\ \hline
$\Pref[i]$ & 0 & 1 & 0 & 2 & 1 & 3 & 2 & 3 & 4 \\
\end{tabular}
    \caption{$\FirstLex$, $\LastLex$, and $\Pref$ arrays for $S=abacbacab$ and $C=abcab$.}\label{fig:example_algo}
\end{figure}

\begin{figure}[ht]
    \centering
\begin{tikzpicture}
    \draw[densely dashed] (1*0.6,0) .. controls (3*0.6,0.6) and (7*0.6,0.6) .. (9*0.6,0);
    \draw[densely dashed] (9*0.6,0) .. controls (10.5*0.6,0.6) and (12.5*0.6,0.6) .. (14*0.6,0);
    \draw[densely dashed] (14*0.6,0) .. controls (14.6*0.6,0.6) and (15.4*0.6,0.6) .. (16*0.6,0);
    \draw[densely dashed,xshift=1.2cm] (14*0.6,0) .. controls (14.6*0.6,0.6) and (15.4*0.6,0.6) .. (16*0.6,0);
    \draw[densely dashed,xshift=2.4cm] (14*0.6,0) .. controls (14.6*0.6,0.6) and (15.4*0.6,0.6) .. (16*0.6,0);
    \foreach \c/\v [count=\i] in {red/$p_0$,black/,red/$p_1$,black/,green/$q_0$,red/$p_2$,black/,black/,blue/$i$,red/$p_3$,green/$q_1$,black/,red/$p_4$,green/$q_2$,black/,green/$q_3$,red/$p_5$,green/$q_4$,black/,green/$q_5$}
    {
        \filldraw[color=\c] (\i*0.6,0) circle (0.07cm);
        \draw (\i*0.6,0) node[below] {\v};
    }
\end{tikzpicture}
    \caption{Assume that for some words $C$ and $S$ the sequences $\FirstLex=\{p_0,p_1,p_2,p_3,p_4,p_5\}$ (red) and $\LastLex=\{q_0,q_1,q_2,q_3,q_4,q_5,|S|\}$ (green) are as in the figure. Further assume that $\Pref[i]=1$ (i.e., we have $S[i]=S[p_1]\neq S[p_2]$). As shown, we have $q_{\Pref[i]+1}=q_2>i$. Thus, the position $i$ is covered by an occurrence of $C$ as a subsequence using positions $(p_0,i,q_2,q_3,q_4,q_5)$.} 
    \label{fig:correctness}
\end{figure}
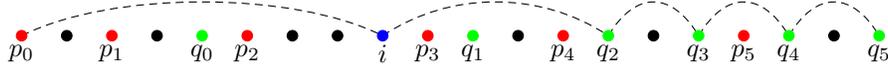

The sequence $\FirstLex$ can be computed with a simple left-to-right pass over $S$ and $C$; the computation of $\LastLex$ is symmetric. The table $\Pref[i]$ is computed on-line using an additional table $\PRED$ indexed by the letters of the alphabet. The algorithm is formalized in the following pseudocode. 

\medskip
\begin{algorithm}[H]
\SetKwBlock{Begin}{}{}
\caption{$\mathit{TEST}(C,S)$}

\smallskip
\KwIn{
    word $C=C[0\dd m-1]$ and
    word $S=S[0\dd n-1]$
}
\KwOut{$true$ if and only if $C$ is an s-cover of $S$}
\smallskip
compute $\FirstLex$ and $\LastLex$ \;

\smallskip
\Begin(\LeftComment{compute $\Pref$}\vspace*{2mm}){
    \lForAll{$c \in \Sigma$}{$\PRED[c]:=-1$}
    $k:=0$ \;
    \For{$i:=0$ \KwTo $n-1$}{
        \If{$i=\FirstLex[k]$}{
            $\PRED[S[i]]:=k$ \;
            \lIf{$k<m-1$}{$k := k+1$}
        }
        $\Pref[i]:=\PRED[S[i]]$ \;
    }
}

\medskip
\KwRet{
    $\forall_{i=0,\ldots,n-1}\,
    (\,\Pref[i] \ne -1\ \textrm{\bf and}\ 
    \LastLex[\Pref[i]+1] > i\,)$
}
\vspace*{0.2cm}
\label{algo:TEST}
\end{algorithm}

\smallskip
Correctness of the algorithm follows from \cref{lem:correctness} (inspect also~\cref{fig:correctness}). 

\begin{lemma}\label{lem:correctness}
Let us assume that $\FirstLex$ and $\LastLex$ are well defined.
Then 
$C$ is an s-cover of $S$ if and only if for each position $0\le i\le n-1$ we have
$\Pref[i]\ne -1$ and $\LastLex[\Pref[i]+1]>i$.
\end{lemma}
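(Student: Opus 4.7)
The plan is to prove the two implications separately, both of which hinge on a basic extremality property of $\FirstLex$ and $\LastLex$: for every position-subsequence $(a_0,\ldots,a_{m-1})$ witnessing an occurrence of $C$ in $S$, one has $\FirstLex[j]\le a_j$ and $\LastLex[j]\ge a_j$ for every $j$. This follows immediately from the greedy interpretation of the two extremal occurrences (a short induction on $j$, using that $a_j>a_{j-1}\ge \FirstLex[j-1]$ and $S[a_j]=C[j]$, and dually for $\LastLex$), and I would include it as a preliminary observation. Throughout the argument I will also use that both $\FirstLex$ and $\LastLex$ are strictly increasing.

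For the forward direction, assume $C$ is an s-cover and fix a position $i$. Then $i$ lies in some subsequence occurrence $(a_0,\ldots,a_{m-1})$ with $a_t=i$ for some index $t$. Extremality gives $\FirstLex[t]\le a_t=i$, and since $S[\FirstLex[t]]=C[t]=S[i]$, the index $t$ belongs to the set defining $\Pref[i]$; hence $\Pref[i]\ge t$, which already yields $\Pref[i]\ne -1$. Setting $j=\Pref[i]\ge t$, when $t<m-1$ the required inequality follows from
\[\LastLex[j+1]\ge \LastLex[t+1]\ge a_{t+1} > a_t = i,\]
using monotonicity of $\LastLex$ and the extremality bound $\LastLex[t+1]\ge a_{t+1}$; when $t=m-1$ the sentinel $\LastLex[m]=n>i$ supplies it directly.

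For the backward direction, assume the stated conditions hold for every $i$, and fix one. Set $j=\Pref[i]$ and consider the position tuple
\[(\FirstLex[0],\ldots,\FirstLex[j-1],\,i,\,\LastLex[j+1],\ldots,\LastLex[m-1]).\]
Strict monotonicity holds because $\FirstLex[j-1]<\FirstLex[j]\le i$ (by strict monotonicity of $\FirstLex$ together with the defining property of $\Pref[i]$) and $i<\LastLex[j+1]<\LastLex[j+2]<\cdots$ by the hypothesis and strict monotonicity of $\LastLex$. The tuple spells out $C$ letter by letter: the first $j$ entries match $C[0\dd j-1]$ because they form a prefix of $\FirstLex$; position $i$ matches $C[j]$ because $S[i]=S[\FirstLex[j]]=C[j]$ by the defining property of $\Pref$; and the final entries match $C[j+1\dd m-1]$ because they form a suffix of $\LastLex$. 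Thus $i$ is covered by an occurrence of $C$.

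The only delicate point I foresee is packaging the extremality property cleanly; once that is in hand, the rest is essentially bookkeeping, and the boundary cases $t=m-1$ and $j=m-1$ are absorbed uniformly by the sentinel $\LastLex[m]=n$, so they require no separate treatment.
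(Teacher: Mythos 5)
Your proof is correct and follows essentially the same route as the paper's: the same extremality property of $\FirstLex$ and $\LastLex$ (proved by the same induction), the same witness occurrence $\FirstLex[0],\ldots,\FirstLex[j-1],i,\LastLex[j+1],\ldots,\LastLex[m-1]$ for the backward direction, and the same use of the sentinel $\LastLex[m]=n$. The only (cosmetic) difference is that you fold the case $\Pref[i]=-1$ into the forward implication via the extremality observation, whereas the paper dispatches it separately up front.
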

\begin{proof}
First, observe that if $\Pref[i]=-1$ for any $i$, then $C$ is not an s-cover of~$S$. This follows from the greedy computation of $\FirstLex$, which implies that the prefix of $C$ that precedes the first occurrence of $S[i]$ in $C$ does not have a subsequence occurrence in $S[0 \dd i-1]$; else, $i$ would be in $\FirstLex$, a contradiction.

We henceforth assume that $\Pref[i]\ne -1$ for every $i$ and show that, in this case, $C$ is an $s$-cover of $S$ if and only if $\LastLex[\Pref[i]+1]>i$ for all $i \in \{0, \ldots , n-1\}$. Let $\FirstLex=\{p_0,p_1,\ldots,p_{m-1}\}$ and $\LastLex=\{q_0,q_1,\ldots,q_{m-1},q_m=n\}$.

\paragraph{$\mathbf{(\Leftarrow)}$}
Assume that $\LastLex[\Pref[i]+1]>i$. 

In this case, position $i$ can be covered by a subsequence occupying positions $p_0,\ldots,p_{j-1},i,q_{j+1},\ldots,q_{m-1}$, for $j=\Pref[i]$. As $S[p_j]=S[i]$, this subsequence is equal to $C$, and as $p_j\le i$ (by the definition of $\Pref[i]$) and $q_{j+1}>i$ (by the assumption), those positions form an increasing sequence (that is, we obtain a valid subsequence).

\paragraph{$\mathbf{(\Rightarrow)}$}
Assume that for some $j$ there exists an increasing sequence
\[r_0,r_1,\ldots,r_{j-1},i,r_{j+1},\ldots,r_{m-1},\] 
such that $S[r_0]S[r_1]\ldots S[r_{j-1}]S[i]S[r_{j+1}]\ldots S[r_{m-1}]=C$.

By induction for $k=0,\ldots, j$, $r_k\ge p_k$ (including $r_j=i$) and for $k=m-1,\ldots, j+1$, $r_k\le q_k$.
But this means that $\Pref[i]\ge j$. If $j<m-1$, we have $i<r_{j+1}\le \LastLex[j+1]$, and otherwise, $i<\LastLex[j+1]=n$.
Hence, $\LastLex[\Pref[i]+1]\ge \LastLex[j+1]>i$.
This completes the proof.
 \end{proof}
Note that, under the assumption of a polynomially-bounded integer alphabet, the table $\PRED$ can be initialized and updated deterministically in linear total time by first sorting the letters of $S$ using Radix Sort.
We thus arrive at the following result.

\begin{theorem}\label{lem:checking_alg}
Given words $C$ and $S$ over an integer alphabet $\{0,\ldots,|S|^{\cO(1)}\}$, we can check if $C$ is an s-cover of $S$ in $\cO(|S|)$ time.
\end{theorem}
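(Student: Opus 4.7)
The plan is to verify that Algorithm TEST runs in $\cO(|S|)$ time, since its correctness has already been established in \cref{lem:correctness}.

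First, I would argue the linear-time computation of $\FirstLex$ and $\LastLex$. A single left-to-right scan of $S$ with a pointer $k$ into $C$ suffices: whenever $S[i]=C[k]$, set $\FirstLex[k]:=i$ and advance $k$; $\LastLex$ is computed symmetrically by scanning from right to left. If at the end of either scan $k<m$, then $C$ has no subsequence occurrence in $S$ at all, and the algorithm correctly returns \emph{false}. Each iteration performs $\cO(1)$ work, assuming letter comparisons take constant time.

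Second, I would analyze the $\Pref$-computation loop. Each iteration $i$ performs a constant number of arithmetic operations together with one read $\PRED[S[i]]$ and, when $i=\FirstLex[k]$, one write $\PRED[S[i]]:=k$. The only subtlety is the initialization of $\PRED[c]$ for every $c\in\Sigma$. Under the polynomially-bounded integer alphabet assumption, I would preprocess by applying Radix Sort to the sequence of letters of $S$ in $\cO(|S|)$ time, using the result to rename the letters bijectively into $\{0,\ldots,|\Alph(S)|-1\}$. Then $\PRED$ is allocated as an array of size $|\Alph(S)|\le |S|$ and initialized to $-1$ in linear time; all subsequent accesses run in $\cO(1)$. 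Finally, the return statement amounts to one further linear pass evaluating the condition of \cref{lem:correctness} at each position.

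The main obstacle is precisely the alphabet handling: without the polynomial bound, the Radix Sort renaming would not run in linear time, and initializing a table indexed by arbitrary integer letters would dominate the running time. Under the stated assumption, however, all three phases (computation of $\FirstLex$ and $\LastLex$, computation of $\Pref$, and the final verification) run in $\cO(|S|)$ time, which yields the claimed bound.
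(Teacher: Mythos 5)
Your proposal is correct and follows essentially the same route as the paper: the paper establishes correctness via \cref{lem:correctness} and then observes that the only non-trivial implementation issue is the $\PRED$ table, which it likewise handles by Radix-Sorting the letters of $S$ under the polynomially-bounded integer alphabet assumption. Your write-up just makes explicit the linear-time greedy computation of $\FirstLex$ and $\LastLex$ and the renaming of letters into $\{0,\ldots,|\Alph(S)|-1\}$, which is exactly the intended argument.
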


In the standard setting (cf.~\cite{DBLP:journals/ipl/Breslauer92}), one can check if a word $C$ is a cover of a word~$S$---what is more, find the shortest cover of $S$---in linear time for any (non-necessarily integer) alphabet. We show below that the existence of such an algorithm for testing a candidate s-cover is rather unlikely. 
Let us introduce a slightly more general version of the s-cover testing problem in which, if $C$ is an s-cover of $S$, we are to say, for each position $i$ in $S$, which position $j$ of $C$ is actually used to cover $S[i]$; if there is more than one such position $j$, any one of them can be output. Let us call this problem the \emph{witness s-cover testing} problem. In particular, our algorithm solves the witness s-cover testing problem with the answers stored in the $\Pref$ array. %Actually it is hard to imagine an algorithm that solves the s-cover testing problem and not the witness version of it.
We next give a comparison-based lower bound for the latter.

\begin{theorem}
The witness s-cover testing problem for a word $S$ of length $n$ (and a word $C$) requires $\Omega(n \log n)$ time in the comparison model.
\end{theorem}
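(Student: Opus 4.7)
The plan is to prove the lower bound by an information-theoretic decision-tree argument: I exhibit a family $\mathcal{F}$ of input pairs $(S,C)$ such that $|\mathcal{F}|$ is at least $n^{\Omega(n)}$, every member of $\mathcal{F}$ admits a single valid witness, and distinct members yield distinct witnesses. Since any deterministic comparison-based algorithm is a ternary decision tree whose depth equals its worst-case number of comparisons, and every input of $\mathcal{F}$ must be routed to a leaf carrying its (unique) correct answer, the depth must be at least $\log_{3}|\mathcal{F}| = \Omega(n\log n)$.

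For the hard family, fix $n$ pairwise distinct abstract letters $c_1,\ldots,c_n$ and set $C=c_1c_2\cdots c_n$. For each tuple $(z_1,\ldots,z_n)\in\{c_1,\ldots,c_n\}^{n}$, define
\[ S \;=\; c_1c_2\cdots c_n \cdot z_1z_2\cdots z_n \cdot c_1c_2\cdots c_n, \]
so that $|S|=3n$; the family $\mathcal{F}$ consists of the pairs $(S,C)$ obtained as $(z_1,\ldots,z_n)$ ranges over $\{c_1,\ldots,c_n\}^{n}$. I would then verify that $C$ is an s-cover of $S$ for every choice of $(z_1,\ldots,z_n)$: positions lying in the first or last copy of $C$ are covered by obvious subsequence occurrences of $C$, while a middle position carrying the letter $c_k$ is covered by taking $c_1,\ldots,c_{k-1}$ from the first copy of $C$, the middle position itself, and $c_{k+1},\ldots,c_n$ from the last copy. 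Moreover, since the letters of $C$ are pairwise distinct, for every position $i$ of $S$ there is a unique $j$ with $C[j]=S[i]$, so the valid $\Pref$ array is uniquely determined by the tuple $(z_1,\ldots,z_n)$ and varies with it.

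Combining these two observations, distinct tuples produce pairwise distinct valid outputs, so the decision tree for the problem has at least $|\mathcal{F}|=n^n$ leaves, hence depth at least $\log_{3}(n^n) = n\log_{3} n = \Omega(n\log n) = \Omega(|S|\log|S|)$. The main obstacle is the careful verification of the covering property on every middle position, regardless of which abstract letter happens to sit there; this is precisely what forces the sandwich design with two outer copies of $C$, ensuring that the required prefix $c_1,\ldots,c_{k-1}$ and suffix $c_{k+1},\ldots,c_n$ of an occurrence through a middle position are always available in $S$.
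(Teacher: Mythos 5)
Your proposal is correct and follows essentially the same argument as the paper: both use the family $S=CTC$ with $C$ consisting of distinct letters and $T$ ranging over all $m^m$ middle words, observe that each choice forces a different witness output, and conclude an $\Omega(n\log n)$ decision-tree depth. The only differences are cosmetic (you spell out the covering verification and use a ternary tree explicitly).
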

\begin{proof}
Let us consider a word $C$ of length $m$ that is composed of $m$ distinct letters and a family of words of the form $S=CTC$, where $T$ is a word of length $m$ such that $\Alph(T)\subseteq\Alph(C)$. Then $C$ is an s-cover of each such word $S$. Each choice of the word $T$ implies a different output to the witness s-cover testing problem on $C$ and $S$. There are $m^m$ different outputs, so a decision tree for this problem must have depth $\Omega(\log m^m) = \Omega(m \log m) = \Omega(n \log n)$. 
\end{proof}

We can define the \emph{coverage} of a word $C$ in a word $S$ as the number of positions in $S$ that are covered by occurrences of $C$ in $S$ as subsequences. If $C$ is not a subsequence of $S$, the coverage is equal to 0. If $C$ is an s-cover of $S$, the coverage is equal to $|S|$.

Even if $C$ turns out not to be an s-cover of $S$, Algorithm~\ref{algo:TEST} actually computes the positions of $S$ that can be covered using occurrences of $C$ (they are exactly the positions $i$ for which $\Pref[i]\ne -1$ and $\LastLex[\Pref[i]+1]>i$). This leads to the following corollary.

\begin{corollary}
Given words $C$ and $S$ over an integer alphabet $\{0,\ldots,|S|^{\cO(1)}\}$, we can compute the coverage of $C$ in $S$ in $\cO(|S|)$ time.
\end{corollary}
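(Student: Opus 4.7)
The plan is to simply extract the counting information from Algorithm \ref{algo:TEST} and argue that the position-wise characterization of covered positions still holds even when $C$ is not an s-cover of $S$. First, I would handle the degenerate case: while computing $\FirstLex$ with the usual left-to-right greedy scan, if we fail to match all of $C$ as a subsequence of $S$, then by definition $C$ has no subsequence occurrence in $S$ at all, so the coverage is $0$ and we can return immediately within the $\cO(|S|)$ budget. Otherwise, $\FirstLex$ and (symmetrically) $\LastLex$ are well defined and computed in linear time, and $\Pref$ is computed in linear time via the $\PRED$ table exactly as in Algorithm \ref{algo:TEST}, using Radix Sort on the letters of $S$ to initialise and update $\PRED$ under the polynomially-bounded integer alphabet assumption.

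Next I would justify that the answer is
\[
\bigl|\{\, i \in \{0,\ldots,n-1\} : \Pref[i] \ne -1 \ \textrm{and}\ \LastLex[\Pref[i]+1] > i \,\}\bigr|,
\]
which can clearly be evaluated in $\cO(|S|)$ additional time by a single sweep. The crucial point is that the $(\Leftarrow)$ and $(\Rightarrow)$ directions of \cref{lem:correctness} are in fact per-position statements: for an individual index $i$, position $i$ is covered by some subsequence occurrence of $C$ if and only if $\Pref[i] \ne -1$ and $\LastLex[\Pref[i]+1] > i$. Neither direction uses the assumption that \emph{every} position is covered; the $(\Leftarrow)$ direction explicitly constructs an occurrence through $i$ using the prefix $p_0,\ldots,p_{\Pref[i]-1}$ of $\FirstLex$ and the suffix $q_{\Pref[i]+1},\ldots,q_{m-1}$ of $\LastLex$, and the $(\Rightarrow)$ direction derives the same inequalities from the existence of any single subsequence occurrence containing $i$. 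So the characterisation transfers verbatim from "s-cover" to "covered position".

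Putting these pieces together yields the corollary: compute $\FirstLex$, $\LastLex$, $\Pref$, then count the qualifying positions, all in $\cO(|S|)$ total time. The only mild subtlety, and the one I would make sure to address explicitly, is the early-exit handling when $C$ is not a subsequence of $S$ at all, since the statement of \cref{lem:correctness} presupposes that $\FirstLex$ and $\LastLex$ are well defined; there is no real obstacle here, as this case is detected during the greedy construction of $\FirstLex$ and directly corresponds to coverage $0$.
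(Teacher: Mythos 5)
Your proposal is correct and matches the paper's own (implicit) argument: the text preceding the corollary notes precisely that Algorithm~\ref{algo:TEST} computes, for every position $i$, whether $\Pref[i]\ne -1$ and $\LastLex[\Pref[i]+1]>i$, and that this condition characterises the coverable positions even when $C$ is not an s-cover, since both directions of \cref{lem:correctness} are per-position statements. Your explicit handling of the case where $C$ is not a subsequence of $S$ (coverage $0$) is a reasonable clarification of a detail the paper leaves to the reader.
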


Hence, our algorithm can be useful to find partial variants of s-covers, defined analogously as for the standard covers~\cite{DBLP:journals/tcs/FlouriIKPPST13,DBLP:journals/algorithmica/KociumakaPRRW15,DBLP:conf/esa/Radoszewski23}.

\section{Maximal lengths of s-primitive words over small  alphabets}\label{sec:gamma}
Let us recall that $\gamma(k)$ denotes the length of a longest s-primitive word over an alphabet of size $k$. It is obvious that $\gamma(1)=1$ and $\gamma(2)=3$ since there are no square-free words of length larger than $3$ over a binary alphabet. In particular, the longest s-primitive binary words are $aba$ and $bab$. 

The case of ternary words is already more complicated; in this section, we study this case and the case of a quaternary alphabet. General bounds on the function $\gamma(k)$ are shown in \cref{subsec:gen} (weaker but simpler bounds are presented in \cref{subsec:gen0}). %A discussion on computing $\gamma(k)$ for small $k>3$ is presented in \cref{subsec:smallk}. In particular, we were not able to compute the exact value of $\gamma(5)$.

%\subsection{Ternary and quaternary alphabet}\label{subsec:3}
%
%The computation of $\gamma(k)$ is nontrivial even 
%in the cases of alphabets of size 3 and 4.

\begin{observation}\label{obs:scover_factor}
If a factor of the word $S$ is not s-primitive, then $S$ is also not
s-primitive.
\end{observation}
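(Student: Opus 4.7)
The plan is to write $S = P F Q$ where $F$ is the non-s-primitive factor guaranteed by the hypothesis (so $P$ is a prefix and $Q$ is a suffix of $S$, possibly empty). By assumption there exists a non-trivial s-cover $C$ of $F$, meaning $|C| < |F|$ and every position of $F$ belongs to some subsequence occurrence of $C$ in $F$. I would take the candidate s-cover of $S$ to be
\[
C' \;:=\; P \cdot C \cdot Q.
\]
Non-triviality is immediate: $|C'| = |P| + |C| + |Q| < |P| + |F| + |Q| = |S|$.

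Next I would verify that $C'$ is an s-cover of $S$. The key observation is that the prefix $P$ of $C'$ can always be matched against the prefix $P$ of $S$ using the \emph{identity} position-subsequence $(0,1,\ldots,|P|-1)$, and the suffix $Q$ of $C'$ can likewise be matched against the suffix $Q$ of $S$ using positions $(|S|-|Q|,\ldots,|S|-1)$. Concatenating such identity occurrences of $P$ and $Q$ with any subsequence occurrence of $C$ inside the block $F$ (whose positions all lie in the range $[|P|, |P|+|F|-1]$) yields a valid increasing subsequence occurrence of $PCQ = C'$ in $S$.

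It remains to handle an arbitrary position $j \in \{0,\ldots,|S|-1\}$. I would split into three cases: if $j<|P|$ or $j\ge |P|+|F|$, the identity occurrence of $P$ (resp.\ $Q$) already covers $j$, and any subsequence occurrence of $C$ in $F$ completes the witness. If $|P|\le j<|P|+|F|$, set $i = j-|P|$, pick a subsequence occurrence of $C$ in $F$ that covers position $i$ of $F$ (which exists because $C$ is an s-cover of $F$), and flank it with the identity occurrences of $P$ and $Q$; this gives a subsequence occurrence of $C'$ in $S$ containing position $j$.

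There is no real obstacle here — the statement is essentially bookkeeping. The only point to be careful about is to use the identity matchings on $P$ and $Q$ rather than arbitrary occurrences, so that no position of $F$ is consumed by the outer $P$ or $Q$ blocks and the middle copy of $C$ retains the full freedom it enjoys as an s-cover of $F$.
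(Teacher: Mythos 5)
Your proposal is correct and follows exactly the paper's argument: the paper's entire proof is the one-liner ``If $S=PUQ$ and $C$ is a non-trivial s-cover of $U$, then $PCQ$ is a non-trivial s-cover of $S$,'' and you have simply spelled out the bookkeeping (identity matchings on $P$ and $Q$, case split on where the position lies) that the authors leave implicit. Nothing to change.
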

\begin{proof}
If $S=PUQ$ and $C$ is a non-trivial s-cover of $U$, then $PCQ$ is a non-trivial s-cover of $S$.
\end{proof}

\begin{fact}\label{fact}
$\gamma(3)=8$.
\end{fact}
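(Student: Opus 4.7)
The plan is to prove $\gamma(3)\geq 8$ and $\gamma(3)\leq 8$ separately.

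\textbf{Lower bound.} I would exhibit a specific s-primitive ternary word $W$ of length $8$, located by a brief systematic search restricted to square-free words avoiding any factor $UVU$ with $\Alph(V)\subseteq\Alph(U)$ (the necessary condition from \cref{SLP}). Certifying that the chosen $W$ is s-primitive reduces to running the procedure $\mathit{TEST}$ from \cref{lem:checking_alg} on every distinct proper subsequence $C$ of $W$ satisfying $\Alph(C)=\{a,b,c\}$; a subsequence missing some letter of $W$ cannot cover the positions carrying that letter, and by \cref{lem:correctness} each remaining candidate is rejected. Since $W$ has only finitely many subsequences, the verification is finite.

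\textbf{Upper bound.} I would argue by contradiction: assume $S$ is s-primitive, ternary, and of length $9$. By \cref{obs:scover_factor} and \cref{SLP}, $S$ contains no factor $UVU$ with $\Alph(V)\subseteq\Alph(U)$; in particular $S$ is square-free. The key consequence is that no length-$3$ factor $U$ with $\Alph(U)=\{a,b,c\}$ can repeat in $S$, because the gap $V$ is automatically a word over $\{a,b,c\}$. Since only $6$ such ``permutation factors'' exist while $S$ has $9-3+1=7$ length-$3$ factors, at least one is of the palindromic type $xyx$. Analogous constraints apply to repeating bigrams (the gap must contain the third letter) and to repeating $xyx$ factors (any repeat forces $\Alph(V)\subseteq\{x,y\}$, also forbidden). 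Fixing $S[0\dd 2]=abc$ by alphabet symmetry and extending $S$ one letter at a time, the choices at each step compatible with square-freeness and all gapped-repeat constraints form a finite tree of bounded branching, and I would show by exhaustion that no branch reaches depth~$9$.

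The main obstacle is discharging the upper-bound case analysis cleanly: the interaction between square-freeness and the gapped-repeat constraint generates a number of sub-cases requiring careful book-keeping, especially when a palindromic factor $xyx$ appears in the interior and interacts with constraints on neighbouring bigrams. A compact way to finish is a short computer enumeration over ternary square-free words of length~$9$, verifying for each that it contains a factor $UVU$ with $\Alph(V)\subseteq\Alph(U)$; combined with the explicit witness produced in the lower-bound step, this establishes $\gamma(3)=8$.
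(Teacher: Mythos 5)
Your lower-bound step is sound: any s-cover of a nonempty word $W$ must occur in $W$ as a subsequence and must contain every letter of $W$, so testing all proper subsequences over the full alphabet is a finite, valid certification. It is heavier than necessary, though --- for the witness used in the paper, $W=abcabacb$, one can argue directly that any s-cover is forced to start with $abc$, end with $acb$ and contain only one $c$, which leaves just a handful of candidates to dismiss by hand.

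The upper bound, however, has a genuine gap. Both your tree search and the concluding computer check rest on the claim that every ternary square-free word of length $9$ contains a gapped repeat $UVU$ with $\Alph(V)\subseteq\Alph(U)$. This is false: the word $abcabacba$ is square-free, has length $9$, and contains no such factor (each repeated letter or bigram is separated by a gap containing a letter outside the repeated part, and all of its factors of length at least $3$ are pairwise distinct). Hence your exhaustion would leave a surviving branch of depth $9$ and no contradiction would be reached. This word is in fact not s-primitive --- it admits the non-trivial s-cover $abcba$ --- but neither square-freeness nor the gapped-repeat condition of \cref{SLP} detects this; some further criterion is indispensable. The paper's proof supplies one (any ternary word with a factor $abXbc$, for distinct letters $a,b,c$, is not s-primitive since $abc$ s-covers that factor) and even then must handle $abcabacba$ as an explicit exception with its ad hoc s-cover. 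A secondary slip: two occurrences of a factor $xyx$ do not force the intervening gap $V$ to satisfy $\Alph(V)\subseteq\{x,y\}$ (the gap may contain the third letter), so repeated $xyx$ factors are not automatically excluded by your constraints.
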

\begin{proof} 
The word $S=abcabacb$ is of length 8 and it is s-primitive.
Indeed, every s-cover of $S$ needs to have a prefix $abc$, a suffix $acb$, and no further letters $c$.
The only word of the form $abcXacb$, for word $|X| \le 2$ over the alphabet $\{a,b\}$, that is an s-cover of $S$ is $S$.
Hence, $\gamma(3) \geq 8$.

In order to show that this bound is tight, we still have to show that each ternary word of length 9 is not s-primitive.
(There are 19683 ternary words of length 9). 
The number of words to consider is substantially reduced by observing that 
relevant words are square-free and do not contain the structure
specified in the following claim.

\begin{claim}\label{obs:square_or_abXbc}
If a word $S$ over a ternary alphabet contains a factor of
the form $abXbc$ for some (maybe empty) word $X$ and different letters $a,b,c$, then it is not s-primitive.
\end{claim}
\begin{proof}
The factor $abXbc$ has $abc$ as its s-cover, and thus it is not s-primitive. Consequently, by \cref{obs:scover_factor}, the whole word $S$ is not s-primitive.
\end{proof}

\begin{figure}[!ht]
    \centering
    \tikzset{font=\small}
    \begin{tikzpicture}[yscale=0.56,xscale=0.55,auto,node distance=0.1cm]

\newcommand{\red}[1]{\textcolor{red}{#1}}

\tikzstyle{dot}=[inner sep=0.03cm, circle, draw, fill=gray]
\node[dot] (r) at (0,0) {};
\foreach \name/\dy/\parent/\l in {
  a/0/r/a,
  ab/0/a/b,
  aba/3.7/ab/a,
  abac/0/aba/c,
  abaca/1/abac/a,
  abacab/0/abaca/b,
  abacaba/0.5/abacab/a,
  %abacabac/0/abacaba/c,
  abacabc/-0.5/abacab/c,
  abacb/-1/abac/b,
  abacba/0.5/abacb/a,
  abacbab/0/abacba/b,
  abacbabc/0/abacbab/c,
  abacbc/-0.5/abacb/c,
  abc/-3.7/ab/c,
  abca/1.9/abc/a,
  abcab/1.7/abca/b,
  abcaba/0/abcab/a,
  abcabac/0/abcaba/c,
  abcabaca/0.8/abcabac/a,
  abcabacb/-0.8/abcabac/b,
  abcabacba/0.5/abcabacb/a,
  abcabacbc/-0.5/abcabacb/c,
  abcac/-1.7/abca/c,
  abcacb/0/abcac/b,
  abcacba/0.5/abcacb/a,
  abcacbab/0.5/abcacba/b,
  abcacbac/-0.5/abcacba/c,
  abcacbaca/0/abcacbac/a,
  abcacbc/-0.5/abcacb/c,
  abcb/-1.9/abc/b,
  abcba/0/abcb/a,
  abcbab/0.7/abcba/b,
  abcbabc/0/abcbab/c,
  abcbac/-0.7/abcba/c,
  abcbaca/0.5/abcbac/a,
  abcbacb/-0.5/abcbac/b,
  abcbacbc/0/abcbacb/c
}{
  \node[dot] (\name) at ($(\parent)+(2,\dy)$) {};
  \draw[black,thick] (\parent)--(\name) node[midway,auto] {\textcolor{blue}{\l}};
}

\foreach \i/\place/\word in {
%abacabac/a\red{ba}cab\red{ac},
1/abacabc/\red{ab}aca\red{bc},
2/abacbabc/\red{ab}acba\red{bc},
3/abacbc/\red{ab}ac\red{bc},
4/abcabaca/a\red{bc}aba\red{ca},
7/abcacbab/ab\red{ca}cb\red{ab},
9/abcacbc/\red{ab}cac\red{bc},
10/abcbabc/\red{ab}cba\red{bc},
11/abcbaca/a\red{bc}ba\red{ca},
12/abcbacbc/\red{ab}cbac\red{bc}
}{
\node (w\i) [] at ($(\place)+(1.5,0)$) {\small \word};
}
\foreach \i/\place/\word in {
%abacabac/a\red{ba}cab\red{ac},
5/abcabacba/abcabacba,
6/abcabacbc/\red{ab}cabac\red{bc},
8/abcacbaca/a\red{bc}acba\red{ca}
}{
\node (w\i) [] at ($(\place)+(2,0)$) {\small \word};
}

\draw ($(w5.center)-(1.8,0.4)$) rectangle ($(w5.center)+(1.8,0.4)$);
\draw ($(w6.center)-(1.8,0.4)$) rectangle ($(w6.center)+(1.8,0.4)$);
\draw ($(w8.center)-(1.8,0.4)$) rectangle ($(w8.center)+(1.8,0.4)$);

\filldraw[green!50!black] (abcabacb) circle (5pt) {};
\filldraw[green!50!black] (abcacbac) circle (5pt) {};

\end{tikzpicture}
\caption{
A trie of all ternary square-free words starting with $ab$, truncated at words that are not s-primitive (in leaves).
Only one word in a leaf ($abcabacba$) does not contain the structure specified in the claim inside the proof of \cref{fact}, but it still has a non-trivial s-cover $abcba$.
The trie has depth 9 (the leaves with words of length 9 are shown in frames and the internal nodes of depth 8 corresponding to s-primitive words are drawn as big dots), so $\gamma(3)=8$.
}\label{fig:ternary}
\end{figure}
\cref{fig:ternary} shows a trie of all ternary square-free words starting with $ab$, truncated at words that are not s-primitive (in leaves). The words in all leaves but one contain the structure from the claim, and for the remaining word, a non-trivial s-cover can be easily given. 
The trie shows that words of length $9$ over a ternary alphabet are not s-primitive.
 \end{proof}

%For a quaternary alphabet we used computer experiments.

\begin{fact}\label{Kuba's mail}
$\gamma(4)=19$.
\end{fact}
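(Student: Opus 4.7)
The plan is to prove the two inequalities $\gamma(4) \ge 19$ and $\gamma(4) \le 19$ separately, following the same general strategy as the proof of \cref{fact} for $\gamma(3) = 8$.

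For the lower bound $\gamma(4) \ge 19$, I would exhibit an explicit quaternary word $W$ of length $19$ and verify directly that it admits no non-trivial s-cover. Any s-cover $C$ of $W$ must satisfy $C[0] = W[0]$ and $C[|C|-1] = W[|W|-1]$, and must contain every letter of $\Alph(W)$; moreover, any letter occurring only once in $W$ must also occur in $C$ at the matching relative position. These rigidity constraints sharply restrict the candidate set of s-covers of length less than $19$, and for each surviving candidate I would invoke the linear-time test of \cref{lem:checking_alg}. A sensible search strategy is to seek $W$ among the longest square-free words that also avoid the gapped-repeat obstruction of \cref{SLP} and the $abXbc$-type structure of the claim inside the proof of \cref{fact} (suitably applied to any three pairwise distinct letters of the quaternary alphabet).

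For the upper bound $\gamma(4) \le 19$, I would mimic the trie-based argument from the ternary case. By symmetry, restrict attention to words beginning with $ab$ (and, as the next new letters appear, $c$ and then $d$). Build the trie of all extensions letter by letter, pruning at any node that contains (i)~a square factor, (ii)~a gapped repeat $UVU$ with $\Alph(V) \subseteq \Alph(U)$ via \cref{SLP}, or (iii)~a factor of the form $\alpha\beta X \beta\gamma$ with pairwise distinct $\alpha,\beta,\gamma$ and $\Alph(X)$ a subset of the overall alphabet---the direct analogue of the claim used for the ternary case. The goal is to show that every branch of the pruned trie either terminates at depth at most $19$ or, at depth~$20$, admits an explicit non-trivial s-cover that can be certified by \cref{lem:checking_alg}.

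The main obstacle is pure combinatorial size. In the ternary case the entire pruned trie fits into a single figure; in the quaternary case the analogous trie is orders of magnitude larger, so pruning rules (i)--(iii) by themselves may leave too many leaves to enumerate by hand. The likely route is therefore a computer-assisted depth-first search that applies the linear-time s-cover test at every node of depth $20$ in the pruned trie, and the paper would only record the extremal witness $W$ of length $19$ together with a small list of ``awkward'' leaves whose s-covers must be exhibited individually---mirroring the single exceptional word $abcabacba$ in \cref{fig:ternary}. Confirming that no such leaf survives at depth~$21$, and that the chosen $W$ of length $19$ truly resists every candidate s-cover, are the two computational bottlenecks on which the proof hinges.
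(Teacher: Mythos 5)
Your proposal takes essentially the same route as the paper: the paper's proof is entirely computer-assisted, exhibiting the length-$19$ witness $abacadbabdcabcbadac$, verifying by program that it is s-primitive, and checking experimentally that no quaternary word of length $20$ is s-primitive (the code is linked in the text following the fact). The pruning rules and rigidity constraints you describe (squares, gapped repeats, the $abXbc$-analogue, forced first/last letters, and the single-occurrence-letter decomposition of \cref{obs:one_letter}) are reasonable ways to organize that search, but the paper does not spell them out and simply reports the computational result.
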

\begin{proof}
Using a computer-assisted verification we have checked that the word $S=abacadbabdcabcbadac$, of length 19, is s-primitive. Thus $\gamma(4)\ge 19$.
We also checked experimentally that $\gamma(4)\le 19$. 
Consequently, $\gamma(4)=19$.
\end{proof}

An optimized \texttt{C++} code used for the experiments can be found at \url{https://github.com/craniac-swistak/s-covers}. The program reads $k$ and computes $\gamma(k)$.

\begin{remark}
There are $2 \cdot 3! = 12$ s-primitive words of length $\gamma(3)=8$ over ternary alphabet (cf.\ \cref{fig:ternary}, for each pair of distinct letters there are two s-primitive words starting with these letters). This accounts for less than $0.2\%$ among all $3^8$ ternary words of length 8.
For a 4-letter alphabet, our program shows that the relative number of s-primitive words of length $\gamma(4)=19$ is very small. There are exactly $2496$ such words, out of $4^{19}$, which gives a fraction less than $10^{-8}$.
This suggests that s-primitive 5-ary words of length $\gamma(5)$ are extremely sparse and
finding an s-primitive word over a 5-letter alphabet of length $\gamma(5)$ could be a challenging task. (In particular, in the next subsection we show that $\gamma(5)\ge 39$.)
\end{remark}

\cref{tab:gamma_ex} summarizes this section.
\begin{table}[htbp]
    \centering
    \begin{tabular}{c|c|l}
         $k$ & $\gamma(k)$ & $\quad$examples of longest s-primitive words \\\hline\hline
         $\quad$1$\quad$ & 1 & $\quad a$ \\\hline
         2 & 3 & $\quad aba$ \\\hline
         3 & 8 & $\quad abcabacb$ \\\hline
         \multirow{1}{*}{4} & \multirow{1}{*}{$\quad$19$\quad$} & $\quad abacadbabdcabcbadac$ %\\
         %&& $\quad abcdabacadbdcbabdac$
    \end{tabular}
    \vspace*{0.2cm}
    \caption{The values of $\gamma$ for small alphabet size $k$.}
    \label{tab:gamma_ex}
\end{table}

\section{General alphabet, preliminary  upper  bound}\label{subsec:gen0}
We start with a simple proof of a preliminary upper bound.
We use two characteristic subsequences defined below.
For a word $S$ over alphabet $\Alph(S)$ of size $k$, let $\first(S)$ (resp.\ $\last(S)$) denote the length-$k$ word containing all the letters of $\Alph(S)$ in the order of their first (resp.\ last) occurrence in $S$. 

\begin{example}We have
\[\first(\underline{ab}a\underline{d}b\underline{c}d)=abdc,\ \last(ab\underline{a}d\underline{bcd})=abcd.\]
\end{example}

\begin{lemma}\label{lem:FL}
For a word $S$, let $F=\first(S),\ L=\last(S)$.
If the word $FLFL$ is a subsequence of $S$, then $FL$ is an s-cover of $S$. 
\end{lemma}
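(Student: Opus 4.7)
The plan is to fix an arbitrary position $i \in \{0, \ldots, |S|-1\}$ and exhibit a subsequence occurrence of $FL$ in $S$ that uses position $i$. Write $F = f_1 \cdots f_k$ and $L = \ell_1 \cdots \ell_k$, and let
$$a_1 < \cdots < a_k \;<\; b_1 < \cdots < b_k \;<\; c_1 < \cdots < c_k \;<\; d_1 < \cdots < d_k$$
be fixed witness positions for $FLFL$ as a subsequence of $S$, so that $S[a_t] = S[c_t] = f_t$ and $S[b_t] = S[d_t] = \ell_t$ for every $t \in \{1,\ldots,k\}$. I will also use $\alpha_1 < \cdots < \alpha_k$ for the first occurrences of $f_1, \ldots, f_k$ in $S$, and $\omega_1 < \cdots < \omega_k$ for the last occurrences of $\ell_1, \ldots, \ell_k$; these two sequences are strictly increasing by the very definitions of $\first(S)$ and $\last(S)$.

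I would then split on whether $i \le b_k$ or $i > b_k$. In the first case, let $j_1$ be the index with $S[i] = f_{j_1}$; since some occurrence of $f_{j_1}$ appears at or before $i$, we have $\alpha_{j_1} \le i$. I would cover $i$ by the sequence
$$\alpha_1, \ldots, \alpha_{j_1 - 1},\; i,\; c_{j_1 + 1}, \ldots, c_k,\; d_1, \ldots, d_k,$$
which spells $f_1 \cdots f_k \ell_1 \cdots \ell_k = FL$. The symmetric case $i > b_k$, with $S[i] = \ell_{j_2}$, would be covered by
$$a_1, \ldots, a_k,\; b_1, \ldots, b_{j_2 - 1},\; i,\; \omega_{j_2 + 1}, \ldots, \omega_k,$$
again spelling $FL$. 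Strict monotonicity in the first case reduces to $\alpha_{j_1 - 1} < \alpha_{j_1} \le i$ (from $\first$) combined with $i \le b_k < c_1 \le c_{j_1 + 1}$ (from the witness and the gap between the two $FL$-blocks); the second case is analogous via $\omega_{j_2} \ge i$ and $b_{j_2 - 1} \le b_k < i$.

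The only real obstacle is bookkeeping: verifying that after inserting $i$ into a prefix of one $FL$-block and a suffix of the other, the resulting sequence of positions is strictly increasing, and that the corner cases $j_1 \in \{1,k\}$ or $j_2 \in \{1,k\}$ (where the borrowed prefix or suffix degenerates to the empty sequence) still yield a valid occurrence. Both points reduce to the strict monotonicity of $(\alpha_t)$, $(\omega_t)$, and of the witness positions, together with the single key inequality $b_k < c_1$ separating the two copies of $FL$ inside $S$. Once these are checked, every position of $S$ has been shown to lie on some subsequence occurrence of $FL$, so $FL$ is an s-cover of $S$.
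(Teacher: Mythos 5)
Your proof is correct and follows essentially the same route as the paper's: you split $S$ at the end of the first witness copy of $FL$ (the paper's pivot $j$ is your $b_k$), cover positions to the left by gluing the first occurrences $\alpha_1,\ldots,\alpha_{j_1-1}$ to a suffix of the second $FLFL$-block, and handle the right half symmetrically with the first block and the last occurrences. The key inequalities you isolate ($\alpha_{j_1}\le i$, $b_k<c_1$, $\omega_{j_2}\ge i$) are exactly what the paper's more abstract phrasing ("$F[\dd p)$ is a subsequence of $S[\dd i)$", "$(FL)(p\dd]$ is a suffix of $FL$, which is a subsequence of $S(i\dd]$") encodes, so no gap remains.
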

\begin{proof}
We need to show that each position $i$ of $S$ is covered by an occurrence of $FL$ as a subsequence. 

There exists a position $j$ in $S$ such that $FL$ is a subsequence of each of the words $S[\dd j]$ and $S(j\dd ]$. We can assume that $i \le j$; the other case is symmetric.

Let $p$ be the index such that $F[p]=S[i]$. It suffices to argue that:

\begin{enumerate}[(1)]
    \item\label{aa} $F[\dd p)$ is a subsequence of $S[\dd i)$
    \item\label{bb} $(FL)(p\dd ]$ is a subsequence of $S(i\dd ]$.
\end{enumerate}

\noindent Point  \eqref{aa} follows by the definition of $F=\first(S)$. As for point \eqref{bb}, $S(i\dd ]$ has a subsequence $FL$ by the fact that $i \le j$ and the definition of $j$, and $(FL)(p\dd ]$ is a suffix of $FL$.
\end{proof}

The next observation follows from \cref{obs:scover_factor} and the definition of $\gamma(k)$.

\begin{observation}\label{obs:gamma_k_plus_1}
If word $S$ is s-primitive and $|S| > \gamma(k-1)$, then every factor of $S$ of length $\gamma(k-1)+1$ contains at least $k$ different letters.
\end{observation}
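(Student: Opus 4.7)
The plan is to prove the statement by contraposition, leveraging the two earlier results as black boxes: the definition of $\gamma(k-1)$ as the supremum length of s-primitive words over alphabets of size at most $k-1$, and Observation~\ref{obs:scover_factor}, which says that any word containing a non-s-primitive factor is itself not s-primitive.

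Concretely, I would assume $S$ is s-primitive and $|S|>\gamma(k-1)$, fix an arbitrary factor $F$ of $S$ of length $\gamma(k-1)+1$, and suppose toward contradiction that $|\Alph(F)|\le k-1$. Then $F$ is a word over an alphabet of size at most $k-1$ whose length strictly exceeds $\gamma(k-1)$, so by the very definition of $\gamma$, $F$ cannot be s-primitive. Therefore $F$ admits a non-trivial s-cover, and by Observation~\ref{obs:scover_factor} (applied to the decomposition $S=PFQ$) this s-cover lifts to a non-trivial s-cover of $S$, contradicting the assumption that $S$ is s-primitive. Hence $|\Alph(F)|\ge k$, as required.

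There is no real obstacle here — the argument is a one-line contraposition once the two cited results are in hand. The only subtlety worth double-checking is that $\gamma$ is indeed defined as the \emph{maximum} length (not supremum over a possibly empty set) of s-primitive words over an alphabet of \emph{exactly} (or equivalently, at most) size $k-1$, so that any word of length $\gamma(k-1)+1$ whose alphabet has size at most $k-1$ is guaranteed to fail s-primitivity. Since $\gamma$ is monotone in $k$ (any s-primitive word over a smaller alphabet is still s-primitive viewed over the larger one), the version ``at most $k-1$'' and ``exactly the size of $\Alph(F)$'' yield the same conclusion, so no case analysis on the actual value of $|\Alph(F)|$ is needed.
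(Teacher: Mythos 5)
Your proof is correct and matches the paper's intent exactly: the paper gives no separate proof, stating only that the observation ``follows from \cref{obs:scover_factor} and the definition of $\gamma(k)$,'' which is precisely the contraposition you spell out. Your extra remark on the monotonicity of $\gamma$ is a reasonable sanity check but not needed beyond what the definition already provides.
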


\begin{fact}\label{fct:ub}
For $k>1$, $\gamma(k)\le (2k+2)\, (\gamma(k-1)+1)-1$.
\end{fact}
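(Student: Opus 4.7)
The plan is to argue by contradiction. Suppose $S$ is s-primitive over an alphabet of size $k$ with $|S|\ge (2k+2)(\gamma(k-1)+1)$; I will exhibit $FLFL$ as a subsequence of $S$, where $F=\first(S)$ and $L=\last(S)$, so that by \cref{lem:FL} the word $FL$ is a non-trivial s-cover of $S$ (note $|FL|=2k<|S|$), contradicting s-primitivity. If $|\Alph(S)|<k$, then $|S|\le\gamma(k-1)$ and the bound holds trivially, so I may assume $|\Alph(S)|=k$; then by \cref{obs:gamma_k_plus_1} every factor of $S$ of length $g:=\gamma(k-1)+1$ contains \emph{all} $k$ letters of $\Alph(S)$.

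The key subsidiary step is a greedy ``packing'' claim: any word $W$ of length $m$ over $\Alph(S)$ is a subsequence of every length-$mg$ factor of $S$. This follows by induction on $m$: within the first $g$ positions of such a factor one finds $W[0]$, since that length-$g$ window contains every letter of $\Alph(S)$; what remains is a factor of length at least $(m-1)g$ into which $W[1]W[2]\cdots W[m-1]$ embeds by the inductive hypothesis.

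I then embed $FLFL$ into four disjoint windows of $S$: first, $F$ into the prefix $S[0\dd g-1]$, which contains all $k$ letters and into which $F$ (the first-occurrence listing) automatically embeds; next, $L$ into $S[g\dd g+kg-1]$ and a second $F$ into $S[g+kg\dd g+2kg-1]$, both via the packing claim with $m=k$; and finally $L$ into the suffix $S[|S|-g\dd|S|-1]$ by the argument symmetric to the first step. The four windows are pairwise disjoint precisely when $g+2kg-1<|S|-g$, i.e., $|S|\ge (2k+2)g=(2k+2)(\gamma(k-1)+1)$, which is exactly our hypothesis. Consequently, $FLFL$ is a subsequence of $S$, and \cref{lem:FL} delivers the contradiction.

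The only real obstacle is the greedy packing claim together with the non-overlap accounting $g+kg+kg+g=(2k+2)g$, which matches the claimed bound exactly; everything else follows mechanically from \cref{lem:FL} and \cref{obs:gamma_k_plus_1}.
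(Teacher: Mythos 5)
Your proof is correct and follows essentially the same route as the paper's: partition $S$ into $2k+2$ windows of length $\gamma(k-1)+1$, note via \cref{obs:gamma_k_plus_1} that each window contains all $k$ letters, embed $F,L,F,L$ greedily into the first window, the next $k$, the next $k$, and the last one, and invoke \cref{lem:FL}. Your explicit ``packing claim'' merely spells out the step the paper states directly, and your treatment of the cases $|\Alph(S)|<k$ and $|S|$ exceeding the threshold are minor completeness points the paper leaves implicit.
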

\begin{proof}
It is enough to show that if a word $S$ satisfies $|\Alph(S)|=k$ and $|S|=(2k+2)\, (\gamma(k-1)+1)$, then $S$ is not s-primitive. Assume to the contrary, that such a word $S$ is s-primitive. Let $F=\first(S),\ L=\last(S)$. We will show that $FLFL$ is a subsequence of $S$. By \cref{lem:FL}, this will yield a contradiction as $|FL|=2k<|S|$.

Let us partition $S$ into blocks $S_1,S_2,\ldots,S_{2k+2}$ of length $\gamma(k-1)+1$. By \cref{obs:gamma_k_plus_1}, each of the blocks contains all the $k$ different letters of $S$. 

Hence, $F$ is a subsequence of $S_1$, $L$ (as a length-$k$ word) is a subsequence of $S_2 S_3 \cdots S_{k+1}$, $F$ is a subsequence of $S_{k+2} S_{k+3} \cdots S_{2k+1}$, and $L$ is a subsequence of $S_{2k+2}$. Consequently, $FLFL$ is a subsequence of $S$.
\end{proof}

\begin{comment}
\begin{observation}\label{obs:one_letter}
If a letter $a$ occurs in a word $S=S' a S''$ only once, then $C$ is an
s-cover of $S$ if and only if  $C\,=\,C' a C''$, where $C',C''$ are s-covers of $S',S''$, respectively.
\end{observation}

\begin{fact}\label{fct:lb}
$\gamma(k)\ge 2^k-1$.
\end{fact}
\begin{proof}
Let $a_1,a_2,\ldots$ be an 
infinite sequence of distinct letters.
We consider the sequence of Zimin words, defined in Equation~\eqref{zimin}.
By \cref{obs:one_letter}, each $Z_k$ is s-primitive.
\end{proof}
\end{comment}

In the next section, we show how to improve the preliminary bound on $\gamma(k)$ from \cref{fct:ub}.% and \cref{fct:lb}.

\section{\bf General alphabet, lower bound and improved upper bound}\label{subsec:gen}
\subsection{Lower bound}
\noindent We need the  following straightforward fact.

\begin{observation}\label{obs:one_letter}
If a letter $a$ occurs in a word $S=S' a S''$ only once, then $C$ is an
s-cover of $S$ if and only if  $C\,=\,C' a C''$, where $C',C''$ are s-covers of $S',S''$, respectively.
\end{observation}

First we obtain a recurrence showing the relation between $\gamma(k)$
and $\gamma(k-1)$ in general case. Then we use the word from \cref{Kuba's mail} as a base of the recurrence.

\begin{lemma}\label{thm:Tomek 29.11}
 For $k \geq 2$ we have $\gamma(k) \ge 2\cdot\gamma(k-1)+1$.
% 
%\
%\le \Delta(k).%2^{k-1}\,k!.
 %\]
 \end{lemma}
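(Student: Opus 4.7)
The plan is to use a duplication-with-fresh-separator construction: take a longest s-primitive word $W$ over an alphabet $\Sigma$ of size $k-1$ (so $|W|=\gamma(k-1)$), pick a fresh letter $a \notin \Sigma$, and consider the word
\[S \;=\; W\,a\,W\]
over $\Sigma \cup \{a\}$, which has alphabet size $k$ and length $2\gamma(k-1)+1$. The goal is to show that $S$ is s-primitive; this immediately gives $\gamma(k) \geq 2\gamma(k-1)+1$.

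To show s-primitivity of $S$, I would take an arbitrary s-cover $C$ of $S$ and argue that $C=S$. First, since the unique occurrence of $a$ in $S$ must be covered by some occurrence of $C$ as a subsequence, $C$ must contain at least one $a$. But $a$ appears only once in $S$, so any subsequence occurrence of $C$ in $S$ can use at most one $a$; hence $C$ contains exactly one $a$, and we may write $C = C'\,a\,C''$ with $C',C'' \in \Sigma^{*}$. Now \cref{obs:one_letter} applies directly to $S = W\,a\,W$ and tells us that $C'$ is an s-cover of the first copy of $W$ and $C''$ is an s-cover of the second copy of $W$.

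The final step uses that $W$ is s-primitive together with the basic length constraint that any s-cover is a subsequence of the covered word, so $|C'|,|C''| \leq |W|$. S-primitivity of $W$ forbids a non-trivial s-cover, so $|C'| \geq |W|$ and $|C''| \geq |W|$; combining both inequalities, $|C'|=|C''|=|W|$. A subsequence of $W$ whose length equals $|W|$ must equal $W$, therefore $C' = C'' = W$ and $C = W\,a\,W = S$. Thus $S$ has no non-trivial s-cover, establishing the claim.

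There is no real obstacle here: the construction essentially reduces to a clean application of \cref{obs:one_letter}. The only point that needs a moment of care is ruling out that $C$ could contain $a$ more than once (handled by the single-occurrence argument) and remembering that an s-cover is always a subsequence of the host word, which together with s-primitivity of $W$ forces $C' = C'' = W$ rather than merely $|C'|=|C''|=|W|$.
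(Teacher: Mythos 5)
Your proposal is correct and follows essentially the same route as the paper: the authors also take a longest s-primitive word $S$ over $k-1$ letters, form $SaS$ with a fresh letter $a$, and invoke \cref{obs:one_letter} to conclude s-primitivity. The paper's proof is just terser, leaving implicit the details you spell out (that $C$ must contain exactly one $a$ and that an s-cover of an s-primitive word of full length must equal it).
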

\begin{proof}
Let $S$ be an s-primitive word of length $\gamma(k-1)$ with $|\Alph(S)|=k-1$ and $a \not\in\Alph(S)$. Then, by \cref{obs:one_letter}, $S'=SaS$ is s-primitive. We have $|\Alph(S')|=k$ and $|S'|=2\cdot\gamma(k-1)+1$.
\end{proof}

\begin{theorem}[Lower bound]\label{cor:Tomek 29.11}
    For $k \geq 4$ we have $\gamma(k) \ge 5\cdot 2^{k-2}-1$.
\end{theorem}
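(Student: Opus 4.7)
The plan is a straightforward induction on $k$ using the recurrence just established and the explicit value for the base case.

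First I would verify the base case $k=4$: by \cref{Kuba's mail} we have $\gamma(4)=19$, and indeed $5\cdot 2^{4-2}-1 = 5\cdot 4 - 1 = 19$, so the inequality holds (with equality) at $k=4$.

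For the inductive step, assuming $\gamma(k-1)\ge 5\cdot 2^{k-3}-1$ for some $k\ge 5$, I would apply \cref{thm:Tomek 29.11}, which gives
\[
\gamma(k) \;\ge\; 2\gamma(k-1)+1 \;\ge\; 2\bigl(5\cdot 2^{k-3}-1\bigr)+1 \;=\; 5\cdot 2^{k-2}-1,
\]
completing the induction. There is essentially no obstacle here: the base case is handed to us by the computer-verified value in \cref{Kuba's mail}, and the arithmetic $2(5\cdot 2^{k-3}-1)+1 = 5\cdot 2^{k-2}-1$ matches the target exactly, which is why the constant $5$ (rather than $4$) appears in the bound — using only $\gamma(1)=1$ as a base would give the weaker $2^k-1$ via Zimin words, whereas starting from $\gamma(4)=19 > 2^4-1 = 15$ gains a factor of $5/4$.
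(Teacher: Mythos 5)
Your proof is correct and is essentially identical to the paper's: both verify the base case $\gamma(4)=19=5\cdot 2^{2}-1$ from \cref{Kuba's mail} and then induct using $\gamma(k)\ge 2\gamma(k-1)+1$ from \cref{thm:Tomek 29.11}. Nothing is missing.
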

\begin{proof}
Due to \cref{Kuba's mail}, $\gamma(4)=19=5\cdot 2^{4-2}-1$. By \cref{thm:Tomek 29.11} and induction, we have for $k>4$,
\[ \ \gamma(k)\ge 2\cdot\gamma(k-1)+1 \ge 2\cdot (5\cdot 2^{k-3}-1)+1 =  5\cdot 2^{k-2}-1.\qedhere\]  
\end{proof}

 \subsection{Two combinatorial facts}
 We define the following condition for two words $X,Y$ of length at least 2.
\noindent 
\begin{description}
    \item{$\Phi(X,Y)$:\ }
 there exist factors $a_1a_2$ of $X$ and $b_1b_2$ of $Y$, where $a_1,a_2,b_1,b_2$ are letters,  satisfying
 \begin{equation}\label{eq:phi}
 a_1\notin \{b_1,\, b_2\}\ \land\ b_2\notin \{a_1,\, a_2\}.
 \end{equation}
\end{description}
%We also say that $X$ ``matches'' $Y$.

\begin{example} We have $\Phi(X,Y)$ for $X=a\underline{bc}a,\ Y=aa\underline{cd}dd$.
\end{example}

The proof of the following technical lemma is deferred to \cref{sec:matching_lemma}. 
%As a side note, the lemma can be verified experimentally by a %computer. It is enough to check all words 
%of length 10 over alphabet $\{1,\ldots,10\}$.

\begin{lemma}[XY-Lemma]
\label{lem:comb}\mbox{ \ }\\
If $X,Y$ are square-free words of length 4 and 6, respectively, then $\Phi(X,Y)$.
\end{lemma}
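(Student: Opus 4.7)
The plan is to argue by contradiction. Suppose $\Phi(X,Y)$ fails, and write $A_i := \{X[i], X[i+1]\}$ for $i \in \{0,1,2\}$; these are the alphabets of the three length-$2$ factors of $X$. With $a_1 = X[i]$, $a_2 = X[i+1]$, $b_1 = Y[j]$, $b_2 = Y[j+1]$, the negation of \eqref{eq:phi} asserts that for every $i \in \{0,1,2\}$ and every $j \in \{0,\dots,4\}$, either $X[i] \in \{Y[j], Y[j+1]\}$ or $Y[j+1] \in A_i$. In particular, whenever $Y[j+1] \notin A_i$, we also have $Y[j+1] \ne X[i]$, so the first alternative forces $Y[j] = X[i]$.

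The first step is to restrict the alphabets. Since no square-free word over a binary alphabet has length more than $3$, both $X$ (length $4$) and the suffix $Y[1\dd 5]$ (length $5$) use at least three distinct letters. If some letter $b \in \Alph(Y[1\dd 5])$ lay outside $\Alph(X) = A_0 \cup A_1 \cup A_2$, then picking $j$ with $Y[j+1] = b$ and running the pointwise constraint above over every $i \in \{0,1,2\}$ would force $Y[j] = X[0] = X[1] = X[2]$, contradicting the square-freeness of $X$. Hence $\Alph(Y[1\dd 5]) \subseteq \Alph(X)$, and in particular $\Alph(X)$ has at least three letters of $Y[1\dd 5]$ available to place.

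The second step is a finite case analysis on the shape of $X$ and $Y[1\dd 5]$. For every position $q \in \{1,\dots,5\}$ of $Y$, the set $\{i : Y[q] \notin A_i\}$ is empty only if $Y[q] \in A_0 \cap A_1 \cap A_2$, and otherwise each $i$ in this set forces $Y[q-1] = X[i]$. When this set has more than one element the forced equalities entail coincidences in $X[0], X[1], X[2]$ that square-freeness of $X$ rules out; when it is a singleton, $Y[q-1]$ is uniquely determined by $Y[q]$. Using this together with square-freeness of $Y[1\dd 5]$ and the fact that at least three distinct letters of $\Alph(X)$ must appear there, I would split according to whether $|\Alph(X)| \in \{3, 4\}$ and (when $|\Alph(X)| = 3$) which of $X[0], X[1], X[2], X[3]$ is the repeated letter, and in each subcase either exhibit a square in $X$ or $Y$ or produce explicit factors $a_1 a_2$ of $X$ and $b_1 b_2$ of $Y$ verifying $\Phi(X,Y)$, contradicting the assumption.

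The main obstacle, I expect, is taming this last case analysis: the implication ``$Y[j] = X[i]$ whenever $Y[j+1] \notin A_i$'' couples positions of $Y$ to positions of $X$ so tightly that $Y[1\dd 5]$ becomes almost determined by $X$, yet several sub-branches remain and must each be excluded by a careful appeal to square-freeness. The single letter $Y[0]$, which is not constrained by $\Alph(Y[1\dd 5]) \subseteq \Alph(X)$, is precisely what makes $|Y| = 6$ (rather than $5$) suffice, and I would exploit it to locate the witness factor of $Y$ once the structure of $Y[1\dd 5]$ has been pinned down. Symmetry under permutation of the alphabet should further compress the bookkeeping.
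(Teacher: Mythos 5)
Your reduction is sound, and it takes a genuinely different route from the paper. The paper normalizes $X$ to one of the four canonical forms $ABAC$, $ABCA$, $ABCB$, $ABCD$ and then \emph{directly exhibits} matching length-$2$ factors by splitting on explicit conditions on $Y$ (whether its length-$5$ prefix leaves $\{A,B,C\}$, whether it contains $CB$ or $AC$, etc.). You instead negate $\Phi(X,Y)$ and extract the forcing rule ``$Y[j+1]\notin\{X[i],X[i+1]\}$ implies $Y[j]=X[i]$'', from which you correctly deduce $\Alph(Y[1\dd 5])\subseteq\Alph(X)$. This engine is arguably cleaner than the paper's, and I can confirm it closes in all four cases: writing $A_i=\{X[i],X[i+1]\}$, for $X=ABAC$ the letter $C$ misses both $A_0$ and $A_1$ and would force $X[0]=X[1]$, so $\Alph(Y[1\dd 5])\subseteq\{A,B\}$, impossible for a square-free word of length $5$; the same collapse to a binary alphabet occurs for $ABCB$ (via the letter $A$) and for $ABCD$ (via $A$ and $D$); and for $ABCA$ every letter forces its predecessor ($A\mapsto B$, $B\mapsto C$, $C\mapsto A$), so all six letters of $Y$ are determined by $Y[5]$ and $Y$ is a rotation of $(CBA)^2$ --- a square. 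This is exactly where $|Y|=6$ is needed, as you anticipated.

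That said, as submitted your proof has a genuine gap: the finite case analysis, which is the entire mathematical content of the lemma, is only announced (``I would split according to \dots and in each subcase either exhibit a square \dots or produce explicit factors'') rather than carried out. Until the four cases are actually checked, you have not established that the forcing constraints are inconsistent with square-freeness; a priori some branch could have survived. A second, minor imprecision: you assert that whenever $\{i : Y[q]\notin A_i\}$ has two elements, square-freeness of $X$ is violated. For the pair $\{0,2\}$ the forced coincidence is $X[0]=X[2]$, which square-freeness does \emph{not} forbid (e.g.\ $X=ABAC$); this subcase must instead be excluded by noting that $A_1\subseteq A_0\cup A_2$ (so the index $1$ would also lie in the set, yielding adjacent equal letters), or equivalently by your alphabet-containment step. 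Neither issue reflects a wrong idea, but the proof is incomplete as written.
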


For a word $S=a_1a_2\dd a_n$ of length $n\ge 4$ we introduce  yet another, similar condition: \[\Psi(S) \ \Leftrightarrow\ \Phi(a_1a_2,\ a_{n-1}a_n).\]
We assume that $\Psi(S)$ holds if $n\le 3$. 

\begin{example}% The word $S=abaaaabc$ satisfies the condition $\Psi(S)$.
We have
%$\Phi(ab,cb)$ but not $\Phi(ab,bc)$, and 
$\Psi(abbc)$ but not $\Psi(abcb)$.
\end{example}

%\begin{lemma}\label{lem: cutting in the word}
%If $W$ is square-free and $|W| \ge 12$, then $W$ contains a factor $W'$  such that $|W'|\ge |W|-5$ and $\Psi(W')$.
%\end{lemma}
%\begin{proof}
%I
%\end{proof}

The following fact is a consequence of the previous lemma; see \cref{fig:reduction}.

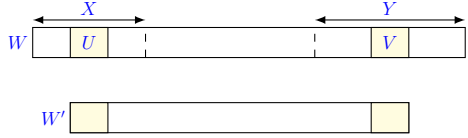
\begin{figure}[!bt]
    %% generated by python s_cover_tool.py -s abacbacab  -c abca
    \centering
    \begin{tikzpicture}[scale=0.5,transform shape]
\tikzset{font=\Large,color=blue};
\def\boxH{0.8}
\draw (0,2) rectangle +(11.5, \boxH) ;
\draw (1,0) rectangle +(9, \boxH);

\node[left] at (0,2+0.5*\boxH) {$W$};
\node[left] at (1,0.5*\boxH) {$W'$};

\foreach \x/\label in {1/U, 9/V}
 {
  \draw[fill=white!85!yellow] (\x,0) rectangle +(1,\boxH);
  \draw[fill=white!85!yellow] (\x,2) rectangle +(1,\boxH) node[midway] {$\label$};
}

\draw[latex-latex] (0,3)--+(3,0) node[midway,above] {$X$};
\draw[dashed] (3,2)--+(0,\boxH);

\draw[latex-latex] (11.5,3)--+(-4,0) node[midway,above] {$Y$};
\draw[dashed] (7.5,2)--+(0,\boxH);

\end{tikzpicture}
    \caption{Illustration of  Corollary~\ref{lem: cutting in the word2}. 
    $|X|=4$, $|Y|=6$ and $|U|=|V|=2$, consequently $|W'|\ge |W|-6$. If $\Phi(U,V)$  then  $\Psi(W')$.}\label{fig:reduction}
\end{figure}

\begin{corollary}\label{lem: cutting in the word2}
If $W$ is square-free, then $W$ contains a factor $W'$  such that\linebreak $|W'|\ge |W|-6$ and $\Psi(W')$.
%and $X'[0] \not\in \{Y'[0],Y'[1]\}$, $Y'[0] \not\in \{X'[0],X'[1]\}$.
\end{corollary}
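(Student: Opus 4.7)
The plan is to apply the XY-Lemma (\cref{lem:comb}) to the length-4 prefix and length-6 suffix of $W$, and then use the two length-2 factors it produces as the endpoints of the desired factor $W'$; this is exactly the packaging depicted in \cref{fig:reduction}. The only real content is the XY-Lemma; the rest is bookkeeping (disjointness of $X$ and $Y$, translating positions, length arithmetic).

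First I would handle the main case $|W| \ge 10$. Let $X := W[0 \dd 3]$ and $Y := W[|W|-6 \dd |W|-1]$; these are disjoint factors of the square-free $W$, hence themselves square-free. By \cref{lem:comb} there exist factors $U = a_1 a_2$ of $X$ and $V = b_1 b_2$ of $Y$ satisfying \eqref{eq:phi}, i.e., $\Phi(U,V)$ holds. Let $i \in \{0,1,2\}$ be the starting position of $U$ in $W$ and $j \in \{|W|-6, \dd, |W|-2\}$ be the starting position of $V$ in $W$. Set $W' := W[i \dd j+1]$; since $W'$ begins with $U$ and ends with $V$, the condition $\Psi(W')$ unfolds to $\Phi(U,V)$, which holds. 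For the length, $|W'| = j - i + 2 \ge (|W|-6) - 2 + 2 = |W| - 6$, as required.

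For the remaining small case $|W| \le 9$, the required bound $|W'| \ge |W| - 6$ is at most $3$, so I would simply take $W'$ to be any factor of $W$ of length $\max(0, |W| - 6) \le 3$; on such short words $\Psi$ holds vacuously by the convention introduced right after the definition of $\Psi$. The main obstacle in proving this corollary is really carried out elsewhere (in \cref{lem:comb}); in the corollary itself the only thing to watch out for is that when $X$ and $Y$ are not disjoint in $W$ — which is exactly the regime $|W| < 10$ — the positions of $U$ and $V$ in $W$ may satisfy $j < i$, making $W[i \dd j+1]$ ill-defined; this is precisely why I separate out the small case and exploit the vacuous clause of $\Psi$ there.
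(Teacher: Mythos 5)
Your proof is correct and follows essentially the same route as the paper: apply the XY-Lemma to the length-4 prefix and length-6 suffix (which are square-free factors of $W$ and disjoint once $|W|\ge 10$), take $W'$ to run from the chosen length-2 factor of the prefix to the chosen one of the suffix, and dispatch the case $|W|<10$ via the convention that $\Psi$ holds for words of length at most $3$. The length computation and the case split match the paper's argument exactly.
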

%\begin{comment}
\begin{proof} 
We assume that $|W|\ge 10$ since for $|W|<10$, $\Psi(W')$ is true for $W'=W[0\dd \min(3,|W|))$.

If $|W|\ge 10$, then consider the prefix $X$ of $W$ of length 4 and the suffix $Y$ of $W$ of length 6; see \cref{fig:reduction}. By \cref{lem:comb}, $\Phi(X,Y)$ holds. Let $U=a_1a_2$ and $V=b_1b_2$ be the factors of $X$ and $Y$, respectively, that satisfy the corresponding condition \eqref{eq:phi}, with $U=X[i \dd i+1]$ and $V=Y[j \dd j+1]$. Then $\Psi(W')$ holds for the factor $W'=W[i \dd (|W|-6)+j+1]$; we have $|W'|=|W|-4+j-i \ge |W|-6$ as $i \le 2$, $j \ge 0$.
 \end{proof}
 %\end{comment}

\begin{example} The number 6 in the above corollary cannot be decreased.
The word $W=abcabadabacba$ is square-free (and s-primitive, and a palindrome). For every factor $W'$ of $W$ of length $|W'|\ge |W|-5$, $\Psi(W')$ does not hold. Let us note that $W''=badabac$ is a factor of $W$ of length $|W|-6$ such that $\Psi(W'')$ holds.
\end{example}

\subsection{Upper bound}

In this section, we give a refined upper bound $\gamma(k)\le (2k-2)\gamma(k-1)+6$ for $k\ge 4$.

To prove it we need to show that if a word $S$ satisfies $|\Alph(S)|=k$ and $|S|>(2k-2)\gamma(k-1)+6$, then $S$ is not s-primitive. By \cref{lem: cutting in the word2} and \cref{obs:scover_factor} we know that it is enough to prove this fact for $S$ such that $\Psi(S)$ and $|S|> (2k-2)\gamma(k-1)$ (a factor of the original $S$).

%Our proof is by contradiction. The general idea is to %show  the implication 
%
%\medskip\centerline{
% \framebox{ $ \,S\mbox{ is s-primitive and }|S|>(2k-%2)\gamma(k-1)\, \ \Rightarrow\ \D\mbox{ is an s-cover %of }S.$}
% }
%that leads to a contradiction.

%\medskip\noindent 
%We assume, until the end of this section, that  
%$S$ is s-primitive.

Let $F=\first(S)$. If the last letter of $F$ and the first letter of $\last(S)$ are different, then let
$L=\last(S)$, otherwise let $L$ equal $\last(S)$ with the first letter deleted.

\begin{observation}\label{obs:no2}
    $FL$ does not contain any equal adjacent letters.
\end{observation}

Words $F,L$ have the following useful property.

\begin{observation}\label{obs:March14}
    If $S[i]$ equals the $j$th letter of $F$, then  there is a 
sequence of $j$ positions  $i_1<i_2<\cdots<i_j=i$ such that $S[i_1]S[i_2]\cdots S[i_j]$ is a prefix of $F$. A symmetric ``reverse'' property holds for $L$.
\end{observation}

\begin{lemma}\label{April12_2024}
If $\Psi(S)$ and  $S$ is  s-primitive 
then $|S|\le (2k-2)\gamma(k-1)$.
\end{lemma}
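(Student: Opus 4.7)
The plan is to prove the contrapositive: assuming $\Psi(S)$ and $|S|>(2k-2)\gamma(k-1)$, I will exhibit $FL$ as a proper s-cover of $S$. Since $|FL|\le 2k$ and $(2k-2)\gamma(k-1)$ comfortably exceeds $2k$ for the range of $k$ we care about, this yields a non-trivial s-cover and contradicts s-primitivity.

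\textbf{Step 1: Block decomposition.} I would cover $S$ with $2k-2$ consecutive blocks $B_1,\ldots,B_{2k-2}$, each of length $\gamma(k-1)+1$, with a one-letter overlap between consecutive blocks (possible because $|S|\ge (2k-2)\gamma(k-1)+1$). By \cref{obs:gamma_k_plus_1}, each block contains every letter of $\Alph(S)$. By \cref{obs:no2}, $FL$ has no two equal adjacent letters, so occurrences of $FL$ as a subsequence are ``blind'' to the duplicated overlap positions; it suffices to show that $FL$ s-covers the concatenation $T=B_1B_2\cdots B_{2k-2}$. I then regroup $T$ as $T=A\cdot B\cdot U\cdot V\cdot C\cdot D$, where $A=B_1$ (contains $F$), $D=B_{2k-2}$ (contains $L$), and $U,V$ are two designated middle blocks (each containing every letter), while $B$ and $C$ are the remaining left and right blocks. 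In particular $B$ carries enough blocks to contain a length-$(k{-}3)$ suffix of $L$, and $C$ carries enough blocks to contain a length-$(k{-}3)$ suffix of $F$.

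\textbf{Step 2 (heart of the proof): Covering middle positions.} For $i\in U\cup V$, write $F=a_1\cdots a_k$ and $L=b_1\cdots b_{k'}$ with $k'\in\{k-1,k\}$. If $S[i]\notin\{a_1,a_2,b_{k'-1},b_{k'}\}$ the covering is routine: realise $a_1\cdots a_{j-1}$ in $A\cup B$, use $S[i]$ as $a_j$, realise $a_{j+1}\cdots a_k$ in the remainder of $U\cup V$ together with $C$, and realise $L$ in $D$. The delicate cases are $S[i]\in\{a_1,a_2\}$ (and symmetrically $S[i]\in\{b_{k'-1},b_{k'}\}$). Here $\Psi(S)$ enters via \eqref{eq:phi} applied to the length-$2$ prefix $a_1a_2$ and suffix $b_{k'-1}b_{k'}$ of $S$: it gives $a_1\notin\{b_{k'-1},b_{k'}\}$ and $b_{k'}\notin\{a_1,a_2\}$. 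When $S[i]=a_1$, this forces $a_1=b_j$ for some $j<k'-1$, whence a valid occurrence is $a_1\cdots a_kb_1\cdots b_{j-1}$ inside $A$ together with $B$, then $S[i]$ playing the role of $b_j$, then $b_{j+1}\cdots b_{k'}$ inside $D$. The case $S[i]=a_2$ splits according to whether $i\in U$ or $i\in V$: in the former one uses $V$ to supply the missing $a_3$, in the latter one uses $U$ to carry an intermediate letter of $L$ exactly as in the $a_1$ case. The mirror argument handles $S[i]\in\{b_{k'-1},b_{k'}\}$.

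\textbf{Step 3: Covering outer positions.} For $i\in A\cup B$ with $S[i]=a_j$, \cref{obs:March14} yields indices $i_1<\cdots<i_j=i$ with $S[i_1]\cdots S[i_j]=a_1\cdots a_j$; then $U\cup V\cup C$ supplies $a_{j+1}\cdots a_k$ (as each of $U,V,C$ contains every letter) and $D$ supplies $L$, giving an occurrence of $FL$ through $i$. Positions $i\in C\cup D$ are handled by the reverse version of \cref{obs:March14}.

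\textbf{Main obstacle.} The real difficulty lies in Step~2, the middle-block case analysis. When $S[i]$ coincides with a boundary letter of $F$ or $L$, the natural covering subsequence threatens to run out of room on one side of $i$; $\Psi(S)$ is precisely the combinatorial separation hypothesis that rules this out. Organising the three sub-cases for $S[i]\in\{a_1,a_2\}$ (and their mirror for $\{b_{k'-1},b_{k'}\}$) so that all routing remains consistent with an \emph{increasing} sequence of positions is the technically delicate part; the overlapping block decomposition and the fact that every block contains every letter keep the bookkeeping tractable.
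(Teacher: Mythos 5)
Your proposal is correct and follows essentially the same route as the paper's proof: the same $2k-2$ overlapping blocks, the same reduction to the concatenation $\mathbf{T}$ via \cref{obs:no2}, the same decomposition $A\,B\,UV\,C\,D$ with a Middle-Blocks case analysis driven by $\Psi(S)$, and the same handling of outer positions via \cref{obs:March14}. One small slip: in your Step 1 the segment $B$ must contain the length-$(k'-3)$ \emph{prefix} of $L$ (the paper's $L^-$), not a suffix of $L$ --- which is in fact what your Step 2 correctly uses when it routes $a_1\cdots a_k b_1\cdots b_{j-1}$ through $A$ and $B$.
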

\begin{proof}
The proof is by contradiction.
We show that
if
$S$ is s-primitive and $|S|> (2k-2)\gamma(k-1)$, then $FL$ is an s-cover of $S$.

Let us cover $S$ with $2k-2$ blocks, each of length $\gamma(k-1)+1$, with overlaps of one position between consecutive blocks.
The final block can be longer if $|S|> (2k-2)\cdot \gamma(k-1)+1$.

Let $B_1,B_2,\ldots,B_{2k-2}$ denote the respective blocks and $\mathbf{T}=B_1B_2 \cdots B_{2k-2}$ be their concatenation. \cref{obs:no2} implies the following.

\begin{observation}\label{123}
$FL$ is an s-cover of $S$ if and only if it is an s-cover of $\mathbf{T}$.
\end{observation}
\begin{proof}
Word $\mathbf{T}$ is the word $S$ with some letters duplicated.
If $FL$ is an s-cover of $S$, then the duplicates of letters that occur in $\mathbf{T}$ can be covered using the same subsequences as the originals.
Conversely, if $FL$ is an s-cover of $\mathbf{T}$, each subsequence of $\mathbf{T}$ equal to $FL$ transforms to a subsequence equal to $FL$ in $S$ by removing the duplicated letters.
Indeed, by \cref{obs:no2}, a subsequence equal to $FL$ of $\mathbf{T}$ uses at most one of each pair of equal adjacent letters.
\end{proof}

Therefore, it suffices to show that $FL$ is an s-cover of $\mathbf{T}$.
By \cref{obs:gamma_k_plus_1}, each block in $\mathbf{T}$ contains all the $k$ letters.
The first and the last block in $\mathbf{T}$ contain $F$ and $L$ as a subsequence, respectively.
Any $k-3$ consecutive blocks contain $F_-$ or $L^-$ as a subsequence, where $X_-$ (respectively $X^-$) is the word obtained from $X$ by deleting the first (last, respectively) \textbf{three letters}.

Hence there is a decomposition $\mathbf{T}\,=\, A\,B\,UV\,C\,D$, where $U,V$ are  \emph{middle} blocks,  $A$ is the first block and contains $F$, $B$ contains $L^-$, $U$ and $V$ each contain every distinct letter, $C$
contains $F_-$, and $D$ contains $L$.

Assume we want to cover a position $i$ with an occurrence of $FL$ as a subsequence. We start by showing that each position $i$ inside the $UV$ block is covered.
It is enough to  prove the following claim for the simplified version $\mathbf{T}'$ of $\mathbf{T}$ in which  the outer blocks are replaced only by the letters used when a position inside $UV$ is covered with an occurrence of $FL$.

We notice that $\Psi(\mathbf{T})\Leftrightarrow \Psi(\mathbf{T}')$, since the first two and last two letters in both words match.

\begin{figure}[ht]
%\vspace*{-0.2cm}
\centering
    \begin{tikzpicture}[scale=0.87,transform shape]

\tikzstyle{dot}=[inner sep=0.045cm, circle, draw, fill=red]

\newcommand{\casesTemplate}[0]{
  \foreach \i in {1,...,6} {
    \node[above] (la\i) at (\i*0.5, 0) {$a_\i$};
  }
  \foreach \i in {1,...,3} {
    \node[above] (lb\i) at (3+\i*0.5, 0) {$b_\i$};
  }
  \draw[blue] (5,0) rectangle +(2,0.5) node[midway] (u) {};
  \draw[red] (7,0) rectangle +(2,0.5) node[midway] (v) {};
  \foreach \i in {4,...,6} {
    \node[above] (ra\i) at (7.5+\i*0.5, 0) {$a_\i$};
  }
  \foreach \i in {1,...,6} {
    \node[above] (rb\i) at (10.5+\i*0.5, 0) {$b_\i$};
  }
}

\begin{scope}
\casesTemplate
\node[green!40!black] (ua) at ($(u)-(0.2,0)$) {$a_1$};
\foreach \p in {la1,la2,la3,la4,la5,la6,lb1,lb2,ua,rb4,rb5,rb6} {
  \node[dot] at ($(\p.south)-(0,0.2)$) {};
}
\draw[orange,thick] (ua) circle (0.22cm);
\node at ($(u)+(0,0.5)$) {$U$};
\node at ($(v)+(0,0.5)$) {$V$};
\node at ($(u)+(1,-0.5)$) {$a_1=b_3$};
\end{scope}

\begin{scope}[yshift=-2cm]
\casesTemplate
\node[black] (ub4) at ($(u)-(0.2,0)$) {$b_4$};
\node[green!40!black] (va2) at ($(v)+(0.2,0)$) {$a_2$};
\draw[orange,thick] (va2) circle (0.22cm);
\foreach \p in {la1,la2,la3,la4,la5,la6,lb1,lb2,lb3,ub4,va2,rb6} {
  \node[dot] at ($(\p.south)-(0,0.2)$) {};
}
\node at ($(u)+(1,-0.5)$) {$a_2=b_5$};
\end{scope}

\begin{scope}[yshift=-4cm]
\casesTemplate
\node[green!40!black] (ua2) at ($(u)-(0.2,0)$) {$a_2$};
\draw[orange,thick] (ua2) circle (0.22cm);
\node[black] (va3) at ($(v)+(0.2,0)$) {$a_3$};
\foreach \p in {la1,ua2,va3,ra4,ra5,ra6,rb1,rb2,rb3,rb4,rb5,rb6} {
  \node[dot] at ($(\p.south)-(0,0.2)$) {};
}
\end{scope}

\end{tikzpicture}
    \caption{We have here $\Psi(\mathbf{T}')$, where  $\mathbf{T}'\,=\, F\cdot L^-\cdot U\cdot V\cdot F_-\cdot L$,  
    $F=a_1 \cdots a_6$, $L=b_1 \cdots b_6$. 
    In the construction, we use the fact that $a_1\ne b_5$, $a_1\ne b_6$ and $a_2\ne b_6$. The big dots
    show the $\D$-subsequences covering a chosen position $i$ in $UV$ containing $a_1$ or $a_2$. In the last two cases, we find an \emph{additional} position in $U,V$ containing $b_4$, $a_3$, respectively.  
    The  case $\mathbf{T}'[i]=b_6 \vee \mathbf{T}'[i]=b_5$   is symmetric.}\label{fig:3cases}
\end{figure}
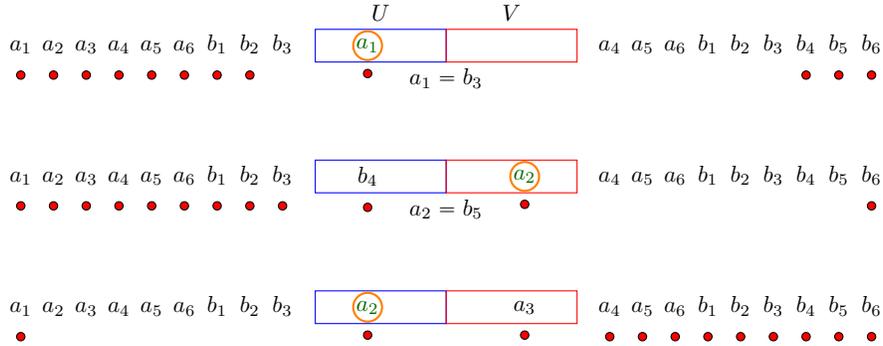

\begin{claim}[Middle-Blocks]\label{lem: critical cases}
Let $k\ge 4$ and $\mathbf{T}'\,=\, F\cdot L^-\cdot U\cdot V\cdot F_-\cdot L$, where $F$ is a permutation of $\Alph(\mathbf{T}')$ and either $L$ or $F[|F|-1]L$ is a permutation of $\Alph(\mathbf{T}')$.  
Assume that $\Psi(\mathbf{T}')$ holds and each of $U$, $V$ contains all letters from $\Alph(\mathbf{T}')$.
%
%\[ |L|=|F|=|\alpha(L)|=|\alpha(F)|=k,\
% |U|=\gamma(k-1),|V|=\gamma(k-1+1.\]
Then $\D$ is an s-cover of $\mathbf{T}'$.
\end{claim}
\begin{proof} 
It suffices to show that each position in $U\cdot V$ is covered by $\D$.
Let $F=a_1a_2\cdots a_k,\, L=b_1b_2\cdots b_{k'}$ ($k' \in \{k-1,k\}$).
Let $i$ be a position in $\mathbf{T}'$ that is located in $UV$.
There are three critical cases when  $\mathbf{T}'[i]\in \{a_1,a_2\}$.
\cref{fig:3cases} illustrates the proof  for each of these cases.
\begin{description}
\item{\bf Case 1: $\mathbf{T}'[i]=a_1$.} 
Due to condition $\Psi(\mathbf{T}')$ we have $a_1=b_j$, $j<k'-1$.
Then $FL$ covers position $i$ as follows: $a_1a_2\cdots a_kb_1b_2\cdots b_{j-1}$ is a prefix of $F L^-$, $b_j=\mathbf{T}'[i]$, and $b_{j+1}\cdots b_{k'}$ is a suffix of $L$.
\item{\bf Case 2: 
$(\mathbf{T}'[i]=a_2)\wedge (i\in V)$.}

Again, due to $\Psi(\mathbf{T}')$, $a_2=b_j$, $j<k'$. Then $FL$ covers position $i$ as follows:
$a_1a_2\cdots a_kb_1b_2\cdots b_{j-2}$ is a prefix of $F L^-$, $b_{j-1}\in U$, $b_j=\mathbf{T}'[i]$, and $b_{j+1}\cdots b_{k'}$ is a suffix of $L$.
\item{\bf Case 3: $(\mathbf{T}'[i]=a_2)\wedge (i\in U)$.} 
Now we can select $a_3\in V$ and $\D$
covers $i$.
\end{description}

\noindent The condition $\Psi(\mathbf{T}')$ is crucial here.
The remaining cases, that is, $\mathbf{T}'[i]\notin \{a_1,a_2\}$, are straightforward (the condition $\Psi(\mathbf{T}')$ does not play a role). 

Indeed, let $\mathbf{T}'[i]=a_j$ with $j>2$.
Then $FL$ covers position $i$ as follows: $a_1a_2\cdots a_{j-1}$ is a prefix of $F$ (hence, of $F L^-$), $a_j=\mathbf{T}'[i]$, and $a_{j+1} \cdots a_k L$ is a suffix of $F_-L$ (as $j+1 >3$).
This completes the proof of the claim.
 \end{proof}

 By the claim, every position in the middle blocks of $\mathbf{T}\,=\, A\,B\,UV\,C\,D$ is covered by an occurrence of $FL$.
Now it remains to show how we can cover the remaining positions of $\mathbf{T}$.

\begin{description}
\item{\textbf{Case 1:} $i$ is to the left of $UV$.}
Let  $\mathbf{T}[i]=a_j$. Then, due to \cref{obs:March14}, there is a 
sequence of positions  $i_1<i_2<\cdots <i_j=i$ such that $\mathbf{T}[i_1]\mathbf{T}[i_2]\cdots \mathbf{T}[i_j]=a_1a_2\cdots a_j$.
The word $UVC$ has a subsequence $a_{j+1}a_{j+2}\cdots a_k$ and the word $D$ has a subsequence $L$.
Hence, $FL$ covers $i$.

\cref{notmiddle}  shows the case when $k=6$ and $\mathbf{T}_i=a_1$ or $\mathbf{T}_i=a_4$.  
\item{\textbf{Case 2:} $i$ is to the right of $UV$.} The proof is symmetric to Case 1.
\end{description}

\begin{figure}[ht]
%\vspace*{-0.2cm}
\centering
    \begin{tikzpicture}[scale=0.87,transform shape]

\tikzstyle{dot}=[inner sep=0.05cm, circle, draw, fill=red]

\newcommand{\casesTemplate}[0]{
  %\foreach \l/\r/\label in {0/3/, 3/4.5/U, 4.5/6/V, 6/8.25/, 8.25/11.25/} {
  \foreach \l/\r/\label in {3/4.5/U, 4.5/6/V} {
    \draw (\l,0) rectangle (\r,0.5) node[midway,above=0.3cm] {$\label$};
  }
  %\node[above] at (2,0.55) {$i$};
  \node[above] (a2) at (3.75,0) {$a_2$};
  \node[above] (a3) at (5.25,0) {$a_3$};
  \foreach \i in {4,...,6} {
    \node[above] (a\i) at (5+\i*0.5,0) {$a_\i$};
  }
  \foreach \i in {1,...,6} {
    \node[above] (b\i) at (8+\i*0.5,0) {$b_\i$};
  }
}

\begin{scope}
  \casesTemplate
  \node[above] (a1) at (2,0) {$a_1$};
  \draw[orange,thick] (a1) circle (0.22cm);
  \foreach \p in {a1,a2,a3,a4,a5,a6,b1,b2,b3,b4,b5,b6} {
    \node[dot] at ($(\p.south)-(0,0.2)$) {};  
  }
\end{scope}

\begin{scope}[yshift=-2cm]
  \casesTemplate
  \foreach \i in {1,...,3} {
    \node[above] (aa\i) at (-0.25+\i*0.5,0) {$a_\i$};
  }
  \node[above] (aa4) at (2,0) {$a_4$};
  \draw[orange,thick] (aa4) circle (0.22cm);
  \foreach \p in {aa1,aa2,aa3,aa4,a5,a6,b1,b2,b3,b4,b5,b6} {
    \node[dot] at ($(\p.south)-(0,0.2)$) {};  
  }
\end{scope}

\end{tikzpicture}
    \caption{Illustration of covering a position $i\in AB$ (encircled; to the left of $UV$). We show  the cases $\mathbf{T}[i]=a_1$ and $\mathbf{T}[i]=a_4$.
    }\label{notmiddle}
\end{figure}
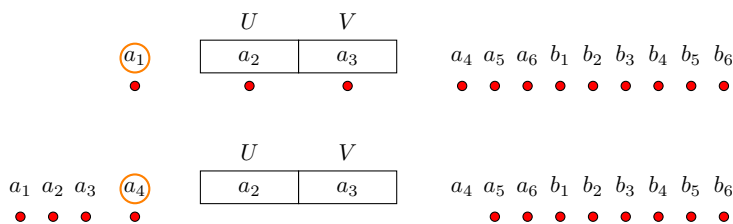

Assuming that $S$ is s-primitive, we derived a contradiction:  $FL$ is an s-cover of $\mathbf{T}$, 
and consequently, due to \cref{123},   of $S$. 
This completes the proof of the lemma.
\end{proof} 
\cref{April12_2024} and XY-Lemma (\cref{lem:comb}) directly imply the 
following upper bound
for general case.

\begin{theorem}[Upper bound]\label{thm: Improved upper bound}
For $k \ge 4$, $\gamma(k)\le (2k-2)\gamma(k-1)+6$.
\end{theorem}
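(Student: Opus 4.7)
The plan is to derive this bound as an immediate corollary of \cref{April12_2024}, using \cref{lem: cutting in the word2} (which rests on the XY-Lemma, \cref{lem:comb}) to pass from an arbitrary s-primitive word to one that also satisfies the side condition $\Psi$. I will argue by contradiction.

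Suppose there exists an s-primitive word $S$ with $|\Alph(S)|=k\ge 4$ and $|S|>(2k-2)\gamma(k-1)+6$. First I would observe that $S$ must be square-free: by \cref{SLP}, any nontrivial square factor would yield a non-trivial s-cover of $S$, contradicting s-primitivity. Since $S$ is square-free, I can apply \cref{lem: cutting in the word2} to extract a factor $S'$ of $S$ with $|S'|\ge |S|-6 > (2k-2)\gamma(k-1)$ and $\Psi(S')$ holding. By \cref{obs:scover_factor}, every factor of an s-primitive word is s-primitive, so in particular $S'$ is s-primitive.

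Now I would apply \cref{April12_2024} to $S'$. Writing $k'=|\Alph(S')|\le k$, the lemma (applied with the alphabet of $S'$) gives $|S'|\le (2k'-2)\gamma(k'-1)$. Invoking monotonicity of $\gamma$ (any s-primitive word over a $k'$-letter alphabet remains s-primitive when that alphabet is embedded into a larger one, so $\gamma(k'-1)\le \gamma(k-1)$), I obtain $(2k'-2)\gamma(k'-1)\le (2k-2)\gamma(k-1)$. Combining gives $|S'|\le (2k-2)\gamma(k-1)$, contradicting the strict inequality $|S'|>(2k-2)\gamma(k-1)$ established above.

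The main obstacle here is essentially absent, because all the heavy combinatorial work has been packaged into the earlier results: \cref{April12_2024} handles the hard case in which $\Psi$ is assumed, and \cref{lem: cutting in the word2} reduces to that case at the cost of six positions. The only minor points to handle are that cutting to $S'$ may strictly decrease the alphabet, which I address by monotonicity of $\gamma$, and that \cref{April12_2024} implicitly requires $k'\ge 4$; the latter is automatic because $|S'|>(2k-2)\gamma(k-1)\ge 6\,\gamma(3)=48$ precludes $|\Alph(S')|\le 3$.
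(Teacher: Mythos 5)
Your proof is correct and follows essentially the same route as the paper: reduce to a factor $S'$ with $\Psi(S')$ via \cref{lem: cutting in the word2} (using square-freeness from \cref{SLP} and heredity of s-primitivity from \cref{obs:scover_factor}), then invoke \cref{April12_2024}. You are in fact slightly more careful than the paper in handling the possibility that $|\Alph(S')|<k$ and in verifying $|\Alph(S')|\ge 4$, but these are minor points the paper leaves implicit, not a difference in approach.
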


Let us define \begin{equation}
\Delta(4)=19, \ \Delta(k)=(2k-2)\cdot \Delta(k-1)+6\ \mbox{for}\ k>4.
\end{equation}

In the conference version of the paper \cite{DBLP:conf/spire/Charalampopoulos22}, we showed \[\gamma(k)\le P(k),\ \mbox{where}\
P(k)=2k\cdot P(k-1)\mbox{ for }k>5, \ P(4)=19.\]
Here we improve this bound by replacing 
$P(k)$ by $\Delta(k)$.
We have \[\Delta(5)=158,\ \Delta(6)=1586,\ \Delta(7)=19038,\ \Delta(8)=266538.\]
\[P(5)=190,\ P(6)=2280,\ P(7)=32046,\ P(8)=512736.\]

Furthermore, it can be shown by induction that for $k \ge 4$, $\Delta(k) <2^{k-1}\,k!$. Therefore, \cref{{cor:Tomek 29.11}}, \cref{Kuba's mail} and \cref{thm: Improved upper bound} imply the following bounds.

\begin{corollary}\label{cor:asymp}
For $k \geq 4$, we have $$5\cdot 2^{k-1}-1 \le \gamma(k) < k!\cdot 2^{k-1}.$$
\end{corollary}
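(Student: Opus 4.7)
The plan is that this corollary is essentially a repackaging of three results already in hand: the lower bound from \cref{cor:Tomek 29.11}, the recursive upper bound $\gamma(k) \le \Delta(k)$ that follows from combining \cref{thm: Improved upper bound} with the base case $\gamma(4)=19$ of \cref{Kuba's mail}, and a short algebraic induction that replaces the recursively defined $\Delta(k)$ with the clean closed form $k!\cdot 2^{k-1}$. No new combinatorial insight is needed; the heavy lifting has been done in the preceding sections.

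For the lower bound I would just invoke \cref{cor:Tomek 29.11}. For the upper bound the single nontrivial step is the induction on $k\ge 4$ showing
\[
\Delta(k) \;<\; k!\cdot 2^{k-1}.
\]
The base case is immediate: $\Delta(4)=19<192=4!\cdot 2^3$. For the inductive step, assume $\Delta(k-1)<(k-1)!\cdot 2^{k-2}$. Applying the defining recurrence $\Delta(k)=(2k-2)\Delta(k-1)+6$ I would compute
\[
\Delta(k) \;<\; 2(k-1)\cdot(k-1)!\cdot 2^{k-2} + 6 \;=\; (k-1)\cdot(k-1)!\cdot 2^{k-1} + 6.
\]
The target $k!\cdot 2^{k-1}=k\cdot(k-1)!\cdot 2^{k-1}$ exceeds this upper estimate by exactly $(k-1)!\cdot 2^{k-1}-6$, which for $k\ge 4$ is at least $3!\cdot 2^3-6=42>0$. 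This closes the induction, and combining with \cref{thm: Improved upper bound} gives the desired $\gamma(k)\le \Delta(k)<k!\cdot 2^{k-1}$.

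Nothing here is a genuine obstacle; the only subtlety worth verifying is that the slack between $(k-1)\cdot(k-1)!\cdot 2^{k-1}$ and $k\cdot(k-1)!\cdot 2^{k-1}$ is comfortably large enough to absorb the additive $+6$ in the recurrence, which is why one naturally ends up with a factorial upper bound rather than a purely exponential one. One should also take care that the induction begins at $k=4$, matching the base case of the recurrence defining $\Delta$, rather than at $k=1$.
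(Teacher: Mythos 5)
Your treatment of the upper bound is exactly the paper's: the paper merely asserts that $\Delta(k)<2^{k-1}\,k!$ "can be shown by induction," and your computation (base case $\Delta(4)=19<192$, inductive step in which the additive $+6$ is absorbed by the slack $(k-1)!\cdot 2^{k-1}$ between $(k-1)\cdot(k-1)!\cdot 2^{k-1}$ and $k!\cdot 2^{k-1}$) is the intended argument and is correct. One point you should not pass over silently, though: the lower bound you invoke, \cref{cor:Tomek 29.11}, gives $\gamma(k)\ge 5\cdot 2^{k-2}-1$, not $5\cdot 2^{k-1}-1$ as the corollary is printed; for $k=4$ the printed bound would assert $\gamma(4)\ge 39$, contradicting $\gamma(4)=19$ from \cref{Kuba's mail}, so the exponent in the statement is a typo inherited from the paper, and simply citing the theorem establishes only $5\cdot 2^{k-2}-1\le\gamma(k)$. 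Apart from flagging that discrepancy, your proposal is correct and coincides with the paper's proof.
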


\begin{comment}
\subsection{Behaviour of the function $\gamma(k)$ for small $k$}\label{subsec:smallk}
\noindent The values of $\gamma$ for small $k$ are as follows (see also \cref{tab:gamma_ex}):

\begin{itemize}
\item $\gamma(1)=1$ -- trivial;

\item $\gamma(2)=3$ -- using square-free words;

\item $\gamma(3)=8$ -- due to \cref{fact} and \cref{fig:ternary};

\item $\gamma(4)=19$ -- through computer experiments.

\item 
$39 \le \gamma(5) \le 190$ -- due to Inequality~\eqref{eqstar} and $\gamma(4)=19$.
\end{itemize}

%\vspace*{-0.8cm}
\begin{table}[htbp]
    \centering
    \begin{tabular}{c|c|l}
         $k$ & $\gamma(k)$ & $\quad$examples of s-primitive words \\\hline\hline
         $\quad$1$\quad$ & 1 & $\quad a$ \\\hline
         2 & 3 & $\quad aba$ \\\hline
         3 & 8 & $\quad abcabacb$ \\\hline
         \multirow{2}{*}{4} & \multirow{2}{*}{$\quad$19$\quad$} & $\quad abacadbabdcabcbadac$ \\
         && $\quad abcdabacadbdcbabdac$ \\\hline
         $5$ & $\ge 39$ & $\quad abacadbabdcabcbadaceabacadbabdcabcbadac$
    \end{tabular}
    \vspace*{0.2cm}
    \caption{The values of $\gamma$ for small alphabet-size $k$.}
    \label{tab:gamma_ex}
\end{table}
\end{comment}

\section{Computing s-covers}\label{sec:slow_algo}
The following observation is a common property of s-covers and standard covers.

\begin{observation}\label{obs:simple}
If $C$ is an s-cover of $S$ and $C'$ is an s-cover of $C$, then $C'$ is an s-cover of $S$.
\end{observation}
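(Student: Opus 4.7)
The plan is a direct transitivity argument: compose a subsequence occurrence of $C$ in $S$ with a subsequence occurrence of $C'$ in $C$ to produce a subsequence occurrence of $C'$ in $S$ covering the target position.

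Concretely, I would fix an arbitrary position $i \in \{0,\ldots,|S|-1\}$ and argue that $i$ lies in some subsequence occurrence of $C'$ in $S$. First, since $C$ is an s-cover of $S$, there exists an increasing sequence $p_0 < p_1 < \cdots < p_{|C|-1}$ of positions in $S$ with $S[p_r] = C[r]$ for all $r$ and $p_j = i$ for some index $j \in \{0,\ldots,|C|-1\}$. Second, since $C'$ is an s-cover of $C$, position $j$ in $C$ lies in some subsequence occurrence of $C'$ in $C$: there exists an increasing sequence $q_0 < q_1 < \cdots < q_{|C'|-1}$ of positions in $C$ with $C[q_s] = C'[s]$ for all $s$, and $q_\ell = j$ for some index $\ell$.

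The key step is then the composition: define the sequence of positions $p_{q_0} < p_{q_1} < \cdots < p_{q_{|C'|-1}}$ in $S$. This is strictly increasing because $s \mapsto q_s$ is strictly increasing and $r \mapsto p_r$ is strictly increasing. Moreover, for each $s$ we have $S[p_{q_s}] = C[q_s] = C'[s]$, so the sequence witnesses an occurrence of $C'$ as a subsequence of $S$. Finally, $p_{q_\ell} = p_j = i$, so $i$ is covered. Since $i$ was arbitrary, $C'$ is an s-cover of $S$.

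There is no real obstacle here; the only thing to be mildly careful about is invoking the definition of an s-cover in the form that gives a single witness occurrence containing the prescribed position (which is exactly how s-cover is defined in the paper), and observing that the composition of two strictly increasing maps between position sets remains strictly increasing.
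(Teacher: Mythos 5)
Your proof is correct and is exactly the intended argument: the paper states this as an observation without proof, and the composition of the two witness occurrences (covering position $i$ via $p_{q_0}<\cdots<p_{q_{|C'|-1}}$ with $p_{q_\ell}=i$) is the standard transitivity argument the authors implicitly rely on. Nothing is missing.
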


\begin{theorem}\label{thm:algo}
Let $S$ be a length-$n$ word over an integer alphabet of size $k=n^{\cO(1)}$.
\begin{enumerate}[(a)]
    \item\label{ita} A shortest s-cover of $S$ can be computed in $\cO(n \cdot \min(2^n,(k-1)^{\gamma(k)}))$ time.
    \item\label{itb} One can check if $S$ is s-primitive and, if not, return a non-trivial s-cover of $S$ in $\cO(n + 2^{\gamma(k)} \gamma(k))$ time.
    \item\label{itc} An s-cover of $S$ of length at most $\gamma(k)$ can be computed in $\cO(n 2^{\gamma(k)} \gamma(k))$ time.
\end{enumerate}
\end{theorem}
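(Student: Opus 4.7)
The plan rests on three reusable ingredients: the linear-time s-cover test of \cref{lem:checking_alg}; the fact that any shortest s-cover $C^\star$ of $S$ is itself s-primitive -- otherwise, by \cref{obs:simple}, its own proper s-cover would be a strictly shorter s-cover of $S$, a contradiction -- so $|C^\star|\le\gamma(k)$ whenever $S$ is not s-primitive; and the factor-replacement principle from \cref{SLP}: if $S=PUQ$ and $C'$ is an s-cover of $U$, then $PC'Q$ is an s-cover of $S$.

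For part~\eqref{ita}, the s-primitivity of $C^\star$ pins down its shape: $C^\star[0]=S[0]$ and $C^\star[|C^\star|-1]=S[n-1]$ (positions $0$ and $n-1$ must be covered), $C^\star$ contains no two equal adjacent letters (else it would contain the non-s-primitive square $aa$), and $|C^\star|\le\max(n,\gamma(k))$. I would use whichever of the two candidate families is cheaper and return the shortest valid s-cover found: (i) the $2^n$ subsequences of $S$ induced by subsets of its positions; or (ii) all words over $\Alph(S)$ of length at most $\gamma(k)$ with the prescribed first and last letters and no two equal adjacent letters, which number at most $\cO((k-1)^{\gamma(k)})$ by a direct count. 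Testing each candidate takes $\cO(n)$ time by \cref{lem:checking_alg}, giving the claimed $\cO(n\cdot\min(2^n,(k-1)^{\gamma(k)}))$ bound.

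For part~\eqref{itb}, I split on the length of $S$. When $|S|\le\gamma(k)$, enumerating the $2^{|S|}\le 2^{\gamma(k)}$ position subsets and testing each in $\cO(\gamma(k))$ time either exhibits a non-trivial s-cover or certifies that $S$ is s-primitive. When $|S|>\gamma(k)$, the prefix $U:=S[0\dd\gamma(k)]$ has $|\Alph(U)|\le k$ and length $\gamma(k)+1>\gamma(k)$, so it is not s-primitive; enumerating its $2^{\gamma(k)+1}$ subsequences furnishes a non-trivial s-cover $C'$ of $U$ in $\cO(2^{\gamma(k)}\gamma(k))$ time, and by the factor-replacement principle $C'\cdot S(\gamma(k)\dd]$ is then a non-trivial s-cover of $S$ that can be emitted in $\cO(n)$ additional time, for a total of $\cO(n+2^{\gamma(k)}\gamma(k))$.

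For part~\eqref{itc}, if $|S|\le\gamma(k)$ return $S$; otherwise iteratively shrink. Maintain the current s-cover $C$ (initially $C:=S$) as a doubly linked list and, while $|C|>\gamma(k)$, find a non-trivial s-cover of the length-$(\gamma(k)+1)$ prefix of $C$ via the enumeration of part~\eqref{itb} in $\cO(2^{\gamma(k)}\gamma(k))$ time and splice it into $C$ using $\cO(\gamma(k))$ linked-list operations. Repeated application of the factor-replacement principle together with \cref{obs:simple} ensures that $C$ remains an s-cover of $S$ throughout, and $|C|$ drops by at least one per iteration, so $\cO(n)$ iterations suffice for a total of $\cO(n\cdot 2^{\gamma(k)}\gamma(k))$. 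The main obstacle is the $(k-1)^{\gamma(k)}$ count in part~\eqref{ita}, which crucially exploits s-primitivity of $C^\star$ to forbid equal adjacent letters and fix both endpoints; the linked-list representation in part~\eqref{itc} is what keeps each reduction at $\cO(2^{\gamma(k)}\gamma(k))$ rather than $\cO(n)$.
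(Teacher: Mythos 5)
Your proposal is correct and follows essentially the same route as the paper: part~(a) enumerates the same two candidate families (subsequences of $S$, or short words with forced first letter and no equal adjacent letters) and tests each with the linear-time checker; part~(b) reduces to a length-$(\gamma(k)+1)$ prefix and extends via factor replacement; part~(c) iterates that reduction with a representation that keeps each step independent of $n$. The only quibbles are cosmetic: the bound $|C^\star|\le\max(n,\gamma(k))$ should read $\min(n,\gamma(k))$ (harmless, since your enumeration uses the correct bounds anyway), and your additional constraint fixing the last letter of $C^\star$ is valid but not needed.
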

\begin{proof}  We prove each of the three points separately.

\paragraph{\bf Point \eqref{ita}} By \cref{obs:simple}, a shortest s-cover of $S$ is s-primitive. By the definition of $\gamma(k)$, every shortest s-cover $C$ of $S$ is a word of length up to $\gamma(k)$. Moreover, every such $C$ starts with the same letter as $S$ and $C[i] \ne C[i-1]$ holds for all $i \in \{1,\ldots,|C|-1\}$. Thus the number of words that need to be tested is bounded by $\sum_{\ell=0}^{\gamma(k)-1} (k-1)^\ell = \cO((k-1)^{\gamma(k)})$. On the other hand, there are $2^n$ subsequences of $S$. Hence, there are $\min(2^n, (k-1)^{\gamma(k)})$ candidates to be checked. With the aid of the algorithm from \cref{lem:checking_alg} we can generate and check each candidate in $\cO(n)$ time. This gives the desired complexity.

\paragraph{\bf Point \eqref{itb}}
If $n \le \gamma(k)$, we can use the algorithm from \eqref{ita} which works in $\cO(2^{\gamma(k)}\gamma(k))$ time. Otherwise, $S$ is not s-primitive. We can find a non-trivial s-cover of $S$ as follows. 

Let $S=S'S''$ where $|S'|=\gamma(k)+1$. We can use the algorithm from \eqref{ita} to compute a shortest s-cover $C$ of $S'$ in $\cO(2^{\gamma(k)}\gamma(k))$ time. Then $C$ is a non-trivial s-cover of $S'$. Then, we can output $CS''$ as a non-trivial s-cover of $S$ (cf.\ \cref{obs:scover_factor}). This takes $\cO(n+2^{\gamma(k)}\gamma(k))$ time.

\paragraph{\bf Point \eqref{itc}}
By \cref{obs:simple}, any s-cover of an s-cover of $S$ will be an s-cover of $S$.
We can thus repeatedly apply the algorithm underlying \eqref{itb}, apart from outputting the computed non-trivial s-cover.
More precisely, in each step the current word is represented as a concatenation of a word of length up to $\gamma(k)$ and a suffix of $S$.

As each application of this algorithm removes at least one letter of $S$, the number of steps is at most $n-\gamma(k)$. Each step takes $\cO(2^{\gamma(k)}\gamma(k))$ time and hence the conclusion follows.
 \end{proof}

 \begin{corollary}
A shortest s-cover of a word over a constant-sized alphabet can be computed in linear time.
\end{corollary}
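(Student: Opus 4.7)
The plan is to apply \cref{thm:algo}\eqref{ita} and observe that the factor $(k-1)^{\gamma(k)}$ in its running time is a constant whenever $k$ is a fixed constant. Specifically, that theorem computes a shortest s-cover of $S$ in $\cO(n \cdot \min(2^n, (k-1)^{\gamma(k)}))$ time under the polynomially-bounded integer alphabet assumption of \cref{lem:checking_alg}, and this assumption is trivially satisfied when $k = \cO(1)$, since then $|\Alph(S)| \le k \le |S|^{\cO(1)}$.

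It remains to verify that $\gamma(k) = \cO(1)$ for constant $k$. This is immediate from \cref{cor:asymp}, which supplies the explicit upper bound $\gamma(k) < k!\cdot 2^{k-1}$. Hence $(k-1)^{\gamma(k)} \le (k-1)^{k!\cdot 2^{k-1}} = \cO(1)$ when $k$ is constant, and the overall running time reduces to $\cO(n)$, with a hidden constant that depends only on the alphabet size. The number of candidate s-covers enumerated by the algorithm (words of length at most $\gamma(k)$ over $\Alph(S)$ starting with $S[0]$ and avoiding equal adjacent letters) is bounded by $\sum_{\ell=0}^{\gamma(k)-1}(k-1)^{\ell}$, which absorbs into this constant, while each individual test runs in $\cO(n)$ time by \cref{lem:checking_alg}.

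There is no genuine obstacle here; the corollary is essentially a direct specialization of \cref{thm:algo}\eqref{ita}. The only implementation-level point worth mentioning is that the enumeration of candidate s-covers must know when to stop, and the explicit bound from \cref{cor:asymp} furnishes the required a priori threshold on the candidate length.
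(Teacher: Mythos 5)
Your proposal is correct and matches the paper's (implicit) argument: the corollary is stated as an immediate consequence of \cref{thm:algo}\eqref{ita}, since for constant $k$ the quantity $\gamma(k)$ is a constant (by \cref{cor:asymp}, or by the explicit values for $k\le 3$), so $(k-1)^{\gamma(k)}$ is a constant and the bound $\cO(n\cdot\min(2^n,(k-1)^{\gamma(k)}))$ collapses to $\cO(n)$. The only cosmetic remark is that the enumeration need not know $\gamma(k)$ in advance — testing candidates in order of increasing length and returning the first s-cover found suffices, with the bound on $\gamma(k)$ serving only to bound the running time.
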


The algorithm from \cref{thm:algo}\eqref{ita} can be slightly improved to work in \[\cO(\min(2^nn,\,(n+\gamma(k))\alpha(k,2)^{\gamma(k)}))\]
time, where
\[\alpha(k,2)=k-1-\tfrac{1}{k-1}-\tfrac{1}{(k-1)^3}+\cO\left(\tfrac{1}{k^5}\right)\] is an upper bound on the growth rate of square-free words over an alphabet of size $k$; see \cite{DBLP:journals/mst/Shur14}.

\section{The number of distinct shortest s-covers}\label{sec:useless}
In the case of standard covers, if a word $S$ has two covers $C,C'$, then one of $C,C'$ is a cover of the other. This property implies, in particular, that a word has exactly one shortest cover.
In this section, we show that analogous properties do not hold for s-covers. There exist words $S$ having two s-covers $C,C'$ such that none of $C,C'$ is an s-cover of the other; e.g.\ $S=abcabcabcb$, $C=abcb$ and $C'=abcacb$. Moreover, a word can have many different shortest s-covers, as shown in \cref{thm:useless}.

\begin{figure}[htpb]
    \centering
\begin{tikzpicture}[xscale=0.3]
    \foreach \i/\c in {1/a,6/b,7/c,8/a,11/d,13/c,14/b,15/a}{
        \draw (\i,1) node[above] {$\textcolor{blue}{\c}$};
    }
    \foreach \i/\c in {1/a,2/b,3/c,4/a,5/d,9/c,10/b,12/a}{
        \draw (\i,0.5) node[above] {$\textcolor{blue}{\c}$};
    }
    \foreach \i/\c in {1/a,2/b,3/c,4/a,5/d,6/b,7/c,8/a,9/c,10/b,11/d,12/a,13/c,14/b,15/a}{

        \draw (\i,0) node[above] {$\textcolor{violet}{\c}$};
    }
\end{tikzpicture}
    \caption{$C_1=abca$ $d$ $cba$ is a shortest s-cover of $S=abca$ $d$ $bcacb$ $d$ $acba$ 
    (a palindrome). Hence $C_1^R=C_2=abc$ $d$ $acba$ is also an s-cover.}
    \label{fig:2-s-covers}
\end{figure}

We start with an example of a word $S$ of length 15 (see \cref{fig:2-s-covers})
with two different shortest s-covers and then extend it recursively.

\begin{lemma}\label{L15}
The word $S=abca\,d\,bcacb\,d\,acba$ has two different  s-covers of length $8$, $C_1=abca\,d\,cba$ and $C_2=abc\,d\,acba$ (cf.\ \cref{fig:2-s-covers}). It does not have any shorter  s-cover.
\end{lemma}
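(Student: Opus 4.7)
The plan is to establish in turn: first, that both $C_1$ and $C_2$ are s-covers of $S$; second, that no word of length at most $7$ is an s-cover of $S$. The upper-bound direction is easy: \cref{fig:2-s-covers} already exhibits the covering subsequence occurrences for $C_1$, and since $S$ is a palindrome while $C_2$ is the reversal of $C_1$, the same figure (read from right to left) certifies $C_2$. Alternatively both claims can be mechanically verified using \cref{lem:checking_alg}.

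The core of the proof is the lower bound. Assume toward a contradiction that $C$ is an s-cover of $S$ with $|C|\le 7$. First I would pin down the three leading and three trailing letters of $C$. To cover $S[i]$, there must exist $j$ with $C[j]=S[i]$ and $C[0\dd j-1]$ a subsequence of $S[0\dd i-1]$. Applying this with $i=0,1,2$, one obtains $C[0]=a$ from the empty-prefix constraint, then $C[1]=b$ since the only length-$1$ option $C[0]$ is already fixed to $a\neq b$, and then $C[2]=c$ since $C[0\dd 1]=ab$ is forced. A symmetric argument from the right end of $S$, using $S[12\dd 14]=cba$, yields $C[|C|-3\dd |C|-1]=cba$.

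Next, since the letter $d$ occurs in $S$ (e.g.\ at position $4$), any s-cover of $S$ must contain $d$. If $|C|=6$ then $C=abccba$ by the previous paragraph, which contains no $d$—contradiction. If $|C|=7$ then the single middle letter must be $d$, leaving the unique candidate $C=abcdcba$. To finish, it remains to verify that this candidate is \emph{not} an s-cover. The easiest witness is $S[7]=a$: the letter $a$ appears in $C$ only at its first and last positions, and mapping $C[0]\to S[7]$ would require $bcdcba$ to occur as a subsequence of $S[8\dd 14]=cbdacba$ (which fails because every $c$ of this factor precedes the unique $d$), while mapping $C[6]\to S[7]$ would require $abcdcb$ to occur as a subsequence of $S[0\dd 6]=abcadbc$ (which fails because no $b$ is left after the final $c$).

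The main obstacle is the structural step that pins down the first and last three letters of $C$; once this rigidity is established, the length-$6$ case collapses for lack of the letter $d$ and the length-$7$ case reduces to inspecting the single candidate $abcdcba$, which can be refuted by hand as above or, equivalently, by invoking \cref{lem:checking_alg}.
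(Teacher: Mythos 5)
Your proposal is correct and takes essentially the same route as the paper: force the prefix $abc$ and suffix $cba$, observe that an s-cover must contain $d$, reduce to the unique short candidate $abcdcba$, and refute it by showing it cannot cover the middle letter $S[7]=a$; you merely make the forcing of the boundary letters more explicit than the paper does. One parenthetical justification is literally false — in $S[8\dd 14]=cbdacba$ the second $c$ \emph{follows} the unique $d$, so the right reason $bcd$ fails to embed is that no $c$ lies strictly between a $b$ and the $d$ — but the non-subsequence claim itself is true, and the omitted lengths $|C|\le 5$ are dispatched by the same ``no room for $d$'' observation.
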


\begin{proof} Let $S\, =\, abcadbcacbdacba$.
\cref{fig:2-s-covers} illustrates that $S$ has two
s-covers of length $8$. We show they are the shortest s-covers.

Any s-cover of $S$ must contain the letter $d$ and before its first occurrence letters $a,b,c$ (in that order) must appear.
Symmetrically, after this letter, letters $c,b,a$ must appear.
The only word of length smaller than $8$ which satisfies this property is $abc$ $d$ $cba$; however, this is not an s-cover of $S$ (as it does not cover the middle letter $a$ in $S$).
 \end{proof}
 
\begin{remark} Assume $a_1,a_2,a_3,\cdots,a_m$ are new letters which are
pairwise distinct. 
    Let us take the word of the form $$ S\, a_1\,S\, a_2\, S\, a_3\,S\, \cdots\, S\, a_m,$$
    of length $n$. 
    
    This word
    has $2^{m}= 2^{\frac{n}{16}}$ distinct shortest s-covers, due to a straightforward extension of  \cref{obs:one_letter}.
    However,  the  number of distinct letters is here $\Omega(n)$.
    In the following theorem, we show that it can be lowered to $O(\log n)$.
    \end{remark}
    
\begin{theorem}\label{thm:useless}
For every positive integer $n$ there exists a word of length $n$ over an alphabet of size $\cO(\log n)$ that has at least $2^{\lfloor\frac{n+1}{16}\rfloor}$ different shortest s-covers.
\end{theorem}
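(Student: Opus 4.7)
The plan is to refine the construction sketched in the remark preceding the theorem, which achieves $2^m$ shortest s-covers of $S\,a_1\,S\,a_2\cdots S\,a_m$ at the cost of $m=\Theta(n)$ distinct ``singleton'' separator letters. I will keep the $m$ copies of the length-$15$ word $S$ from \cref{L15}, but glue them according to a balanced binary tree of depth $\lceil \log_2 m \rceil$ rather than into a flat list, so that separator letters can be reused across all nodes of the same depth without breaking the inductive decomposition.

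Concretely, I reserve a family of fresh letters $f_1, f_2, \ldots$ disjoint from $\Alph(S)=\{a,b,c,d\}$ and define words $U_m^{(d)}$ indexed by $m\ge 1$ and depth $d\ge 1$ by $U_1^{(d)} := S$ and
\[
U_m^{(d)} := U_{\lceil m/2 \rceil}^{(d+1)} \cdot f_d \cdot U_{\lfloor m/2 \rfloor}^{(d+1)} \quad \text{for } m \ge 2.
\]
A routine induction yields $|U_m^{(d)}| = 16m-1$ and $|\Alph(U_m^{(d)})| \le 4 + \lceil \log_2 m \rceil$. The structural claim I need is that $U_m^{(1)}$ has exactly $2^m$ shortest s-covers, each of length $9m - 1$. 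This is proved by induction on $m$, and the only real obstacle is to justify that within $U_m^{(d)}$ the letter $f_d$ occurs exactly once; this holds because the two recursive halves use only $f_{d+1}, f_{d+2}, \ldots$ as separators, so neither contains $f_d$. Given this, \cref{obs:one_letter} guarantees that the shortest s-covers of $U_m^{(d)}$ are exactly the concatenations $C_1 \cdot f_d \cdot C_2$ with $C_1, C_2$ shortest s-covers of $U_{\lceil m/2\rceil}^{(d+1)}, U_{\lfloor m/2 \rfloor}^{(d+1)}$, and their count multiplies to $2^{\lceil m/2\rceil}\cdot 2^{\lfloor m/2\rfloor}=2^m$. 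The base case $m=1$ is exactly \cref{L15}, which supplies two shortest s-covers of $S$ of length $8$.

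Finally, I handle arbitrary $n$ by a bounded padding. Set $m := \lfloor (n+1)/16 \rfloor$ and $r := n - (16m - 1) \in \{0, 1, \ldots, 15\}$. If $m = 0$ (i.e., $n \le 14$) the claim reduces to producing a single shortest s-cover, which any word of length $n$ does trivially. Otherwise, for $r = 0$ I take $W := U_m^{(1)}$, and for $r \ge 1$ I take $W := U_m^{(1)} \cdot x \cdot a^{r-1}$, where $x$ is one more fresh letter and $a^0$ denotes the empty word. Since $x$ occurs only once in $W$, a further application of \cref{obs:one_letter}, combined with the fact that the shortest s-cover of $a^{r-1}$ is $a$ (or the empty word when $r=1$), gives a length-preserving bijection between shortest s-covers of $W$ and shortest s-covers of $U_m^{(1)}$. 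Thus $W$ has exactly $2^m = 2^{\lfloor (n+1)/16 \rfloor}$ shortest s-covers over an alphabet of size at most $5 + \lceil \log_2 m \rceil = \cO(\log n)$, which is the required bound.
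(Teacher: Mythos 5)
Your proposal is correct and uses essentially the same machinery as the paper: copies of the length-15 word from \cref{L15} glued in a binary-tree fashion by separator letters that each occur exactly once, with \cref{obs:one_letter} multiplying the counts of shortest s-covers at every join. The only difference is cosmetic --- the paper defines $T_i=T_{i-1}a_iT_{i-1}$ and takes the length-$n$ prefix of the resulting infinite word (proving the exact count by induction on $n$), whereas you build a balanced tree on $m=\lfloor (n+1)/16\rfloor$ leaves and pad with a fresh letter followed by $a^{r-1}$; both yield the stated bound.
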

\begin{proof}

Assume $a_1,a_2,a_3\cdots$ are new distinct letters. 
We  now construct a sequence of words $T_i$ such that 
$T_0=S\ \mbox{and}\ T_i=T_{i-1}a_iT_{i-1}\ \mbox{for}\ i > 0.$
The word $T_i$ has length $16\cdot 2^{i}-1=2^{i+4}-1$. Let us consider an infinite word $T$ containing each $T_i$ as its prefix  (it is well defined as each $T_i$ is a prefix of $T_{i+1}$). 
We show by induction, using \cref{obs:one_letter}, the following fact.

\smallskip
$T[0\dd n)$ has exactly $2^{\lfloor\frac{n+1}{16}\rfloor}$ different shortest covers.

\smallskip\noindent 
The base case for $n \le 15$ holds as every word has a shortest s-cover and for $n=15$ we apply \cref{L15} as $T[\dd 15)=S$.

Assume that $n > 15$ and $2^{i+4} \le n < 2^{i+5}$. Then 
\[T[0\dd n)=T_{i+1}[0 \dd n)\,=\,T_i\,a_{i+1}\,T_i[0 \dd n-2^{i+4}).\]
By \cref{obs:one_letter}, the number of shortest s-covers of $T[\dd n)$ is the number of shortest s-covers of $T_i$ times the number of shortest s-covers of $T[0 \dd n-2^{i+4})$, that is

\[2^{\frac{2^{i+4}}{16}} \cdot 2^{\lfloor\frac{n-2^{i+4}+1}{16}\rfloor}=2^{\lfloor\frac{n+1}{16}\rfloor},\]
as desired.
\end{proof}

 \section{Proof of XY-Lemma (Lemma~\ref{lem:comb})}\label{sec:matching_lemma}
 %although it can be easily 
 %done by a computer, testing all 4 (essential) possibilities for $X$ and %$4^6$ possibilities for $Y$.
  It will be convenient in the proof 
 to use  a "reversed" version of $\Phi$. 
% $$\Phi^R(X,Y)\; \stackrel{def}{\equiv}\; \Phi(X,Y^R).$$
We say that $X$ \emph{matches} $Y$ if $\Phi(X,Y^R)$, where $Y^R$ is 
the reverse of $Y$.
In other words, $X$ matches $Y$ if there are length-2 factors $a_1a_2$ and $b_1b_2$ of $X$ and $Y$, respectively, satisfying the inequalities represented by edges in \cref{fig:smile}.
Let $A,B,C,D$ be pairwise distinct letters.

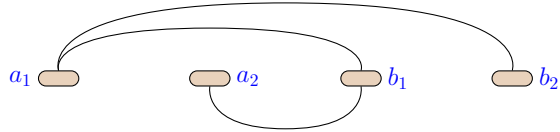
\begin{figure}[h!]
\vspace*{0.3cm}
    \centering
    \begin{tikzpicture}[scale=0.67,transform shape]
\tikzstyle{elem}=[shape=rounded rectangle, inner xsep=0.4cm, inner ysep=0.15cm, draw, fill=brown!35!white];
\tikzset{font=\Large,color=blue};

\node[elem] (a) at (0,0) {};
\node[elem] (b) at (3,0) {};
\node[elem] (c) at (6,0) {};
\node[elem] (d) at (9,0) {};

\node [left] at (a.west) {$a_1$};
\node [right] at (b.east) {$a_2$};
\node [right] at (c.east) {$b_1$};
\node [right] at (d.east) {$b_2$};

\draw plot [smooth, tension=2] coordinates {(b.south) (4.5,-1) (c.south)};
\draw plot [smooth, tension=2] coordinates {(a.north) (3,1) (c.north)};
\draw plot [smooth, tension=2] coordinates {(a.north) (4.5,1.5) (d.north)};

\end{tikzpicture}
    \vspace*{0.2cm}
    \caption{$a_1a_2$ matches $b_1b_2$.
    The edges mean that two connected symbols are different.}
    \label{fig:smile}
\end{figure}
\begin{observation}\label{April30}~\\
\vspace*{-0.5cm}
\begin{enumerate}[(1)]
    \item\label{i1} If $Z_1,Z_2,Z$ are three different letters, then $Z_1Z$ matches $Z_2Z$.
    %$AB$ matches $CB$, $BC$ matches $AC$, $CA$ matches $BA$.\\
    %\item\label{i2} $AB$ matches $CD$, $BC$ matches $AD$, $CA$ matches $BD$.
    \item\label{i3} If $V\in \{AC,\,BA,\,CB,\, AD,\,BD,\,CD\}$, then $ABCA$ matches $V$.
\end{enumerate}
\end{observation}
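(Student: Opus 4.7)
The plan is direct verification from the definition of matching. Recall that $X$ matches $Y$ exactly when one can select length-2 factors $a_1a_2$ of $X$ and $b_1b_2$ of $Y$ satisfying $a_1 \neq b_1$, $a_1 \neq b_2$, and $a_2 \neq b_1$ (the three edges drawn in \cref{fig:smile}). Both parts of the observation amount to exhibiting such factors, so the proof is a short case analysis.

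For part \eqref{i1}, the only length-2 factors available are $a_1a_2 = Z_1Z$ and $b_1b_2 = Z_2Z$, and the three required inequalities reduce to $Z_1 \neq Z_2$, $Z_1 \neq Z$, and $Z \neq Z_2$, which are precisely the assumption that $Z_1, Z_2, Z$ are pairwise distinct.

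For part \eqref{i3}, I would enumerate the six candidates for $V = b_1b_2$. The word $ABCA$ has exactly three length-2 factors, namely $AB$, $BC$, $CA$, and the plan is to pair each $V$ with one of them as follows: pair $V \in \{AC, AD\}$ with $a_1a_2 = BC$, pair $V \in \{BA, BD\}$ with $a_1a_2 = CA$, and pair $V \in \{CB, CD\}$ with $a_1a_2 = AB$. In each row, the first letter of the chosen factor of $ABCA$ does not appear in $V$ (which handles the two inequalities involving $a_1$), and the second letter differs from $b_1$, handling the remaining inequality.

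The whole argument is purely clerical, with no combinatorial depth to speak of: the observation just tabulates those matchings that will later serve as convenient building blocks in the proof of the XY-Lemma. The only thing to be careful about is verifying all six sub-cases of part \eqref{i3} without overlooking one of the three inequalities.
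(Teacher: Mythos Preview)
Your verification is correct and complete. The paper itself states this as an observation without supplying a proof, treating both parts as immediate from the definition; your direct case check is exactly the intended justification.
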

\noindent  \cref{lem:comb} is readily equivalent to the  following fact:
\begin{claim}
 Assume $|X|=4$, $|Y|=6$ and both $X,Y$ are square-free.
 Then $X$ matches $Y$.
\end{claim}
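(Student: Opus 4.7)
The plan is to prove the claim by case analysis on the abstract shape of $X$, classified up to relabeling of letters. A direct enumeration of square-free length-$4$ words (fix $x_1 = A$ and $x_2 = B$ without loss of generality, then branch on $x_3$ and $x_4$) shows that $X$ belongs to exactly one of five shapes: $ABAB$, $ABAC$, $ABCA$, $ABCB$, $ABCD$, using $2$, $3$, $3$, $3$, $4$ distinct letters respectively.

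For each shape I would argue by contradiction. Unwinding the definition, the pair $(a_1, a_2)$ of $X$ fails to match the pair $(b_1, b_2)$ of $Y$ iff $a_1 \in \{b_1, b_2\}$ or $a_2 = b_1$. Consequently, $X$ does not match $Y$ iff every one of the five adjacent length-$2$ factors $(b_1, b_2)$ of $Y$ lies in the \emph{bad set}
\[
\mathcal{B}(X) := \{(b_1, b_2) : \text{for every adjacent pair } (a_1, a_2) \text{ of } X,\; a_1 \in \{b_1, b_2\} \text{ or } a_2 = b_1\}.
\]
Starting from the three adjacent pairs of $X$, $\mathcal{B}(X)$ can be tabulated in a few lines per shape by splitting on the value of $b_1$.

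The key observation is that $\mathcal{B}(X)$ is highly restrictive in every case. For the shapes $ABAB$, $ABAC$, and $ABCB$, one checks that $\mathcal{B}(X)$ forces $y_j$ to lie in a fixed two-letter subalphabet for every $j \in \{1, \ldots, 5\}$, which contradicts the well-known fact that binary square-free words have length at most $3$. For the shape $ABCA$ the bad set turns out to be exactly the three length-$2$ factors of $X$ itself, namely $AB, BC, CA$; they form a directed $3$-cycle on $\{A,B,C\}$, so chaining $5$ of them forces $Y$ to be a cyclic rotation of $(ABC)^2$, which is a square. Finally, for $ABCD$ the bad set is just $\{BC, CA\}$, and these two pairs do not admit a chain of five overlapping pairs at all. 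In each case the contradiction shows that $X$ matches $Y$.

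The main obstacle is bookkeeping rather than combinatorics: tabulating $\mathcal{B}(X)$ correctly for each of the five shapes and checking that it is incompatible with a square-free word of length~$6$. When convenient, the work can be compressed by invoking \cref{April30}(\ref{i1}): if there is a common letter $Z$ appearing non-initially in both $X$ and $Y$ such that the letters immediately preceding $Z$ in $X$ and $Y$ are distinct and both different from $Z$, then the matching condition is immediate. Either way the argument is short and self-contained.
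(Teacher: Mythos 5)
Your proposal is correct, and it takes a genuinely different route from the paper's proof. The paper also normalizes $X$ to the four canonical forms $\Gamma=\{ABAC,ABCA,ABCB,ABCD\}$, but then organizes the case analysis around $Y$: it defines a set $\Gamma'$ of length-$6$ words satisfying one of four syntactic conditions and, for $Y\in\Gamma'$ and $Y\notin\Gamma'$ separately, exhibits explicit matching factors. You instead organize the analysis around $X$: for each canonical form you compute the complete set $\mathcal{B}(X)$ of length-$2$ pairs of $Y$ that fail against \emph{every} adjacent pair of $X$, and then show that a square-free word of length $6$ cannot have all five of its adjacent pairs in $\mathcal{B}(X)$. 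I checked your tabulations and they are right: $\mathcal{B}(ABAC)=\{(A,\ast)\}\cup\{(B,A)\}$ and $\mathcal{B}(ABCB)=\{(B,\ast)\}\cup\{(C,A)\}$ force $y_1\cdots y_5$ into a two-letter alphabet (contradicting that binary square-free words have length at most $3$); $\mathcal{B}(ABCA)=\{AB,BC,CA\}$ forces $Y$ to be a rotation of $(ABC)^2$; and $\mathcal{B}(ABCD)=\{BC,CA\}$ admits no chain of five overlapping pairs. What your approach buys is a cleaner and more verifiable argument: the bad sets are canonical objects, whereas the paper's conditions defining $\Gamma'$ are somewhat ad hoc and its final case still requires a periodicity argument similar to your $ABCA$ case. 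What the paper's approach buys is that the intermediate claims (e.g., that any $Y$ containing $CB$ or a letter outside $\{A,B,C\}$ early on is matched by every $X\in\Gamma$) are reusable observations stated without reference to a particular $X$.

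One small slip: your enumeration of canonical shapes includes $ABAB$, but $ABAB=(AB)^2$ is a square, so it is excluded by hypothesis; the square-free length-$4$ shapes are exactly the four words of $\Gamma$. This is harmless --- it only adds a vacuous case (which your method happens to dispose of anyway) --- but the count ``five shapes'' should be corrected to four.
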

 The proof is split into proofs of two 
claims. The first one concerns the 
situation
when $Y$ satisfies one of several simple conditions. 
The second one is when none of these conditions is satisfied.
Let $\Gamma=\{\lambda_1,\lambda_2,\lambda_3,\lambda_4\},$
\[\lambda_1=ABAC,\, \lambda_2=ABCA,\, \lambda_3=ABCB,\,  \lambda_4= ABCD.
\]
\begin{observation}
    Each $\lambda_i$ contains $AB$ and  contains $BA$ or $BC$.
\end{observation}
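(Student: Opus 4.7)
The observation is of a purely verificatory nature, so my plan is simply to check the four length-4 words one by one. Each $\lambda_i$ is specified explicitly as $\lambda_1=ABAC$, $\lambda_2=ABCA$, $\lambda_3=ABCB$, $\lambda_4=ABCD$, so a direct inspection of length-2 factors suffices.

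First I would note that every $\lambda_i$ has the same length-2 prefix $AB$, which immediately handles the ``contains $AB$'' half of the statement. For the second half, I would list the length-2 factors of each $\lambda_i$ starting at position $1$:
\begin{itemize}
\item In $\lambda_1=ABAC$, the factor at position $1$ is $BA$.
\item In $\lambda_2=ABCA$, the factor at position $1$ is $BC$.
\item In $\lambda_3=ABCB$, the factor at position $1$ is $BC$.
\item In $\lambda_4=ABCD$, the factor at position $1$ is $BC$.
\end{itemize}
Thus $\lambda_1$ contains $BA$ and $\lambda_2,\lambda_3,\lambda_4$ each contain $BC$, completing the verification.

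There is no substantive obstacle here; the statement is really a bookkeeping lemma that lists, in a uniform way, two length-2 factors guaranteed to appear in each pattern from $\Gamma$. Its purpose is to set up the subsequent case analysis (matching a length-4 square-free word against a length-6 square-free word via \cref{April30}), where the prefix $AB$ of $\lambda_i$ and the second factor $BA$ or $BC$ will be combined with edges of the ``smile'' diagram of \cref{fig:smile}.
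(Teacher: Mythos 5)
Your verification is correct and is exactly the argument the paper intends (the paper states this observation without proof, as the check is immediate from the explicit list $\lambda_1=ABAC$, $\lambda_2=ABCA$, $\lambda_3=ABCB$, $\lambda_4=ABCD$). Noting the common prefix $AB$ and the position-$1$ factor ($BA$ for $\lambda_1$, $BC$ for the others) settles both halves of the claim.
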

%
%\textcolor{red}{WR: zostawmy duże litery, lepiej widać o c chodzi. Wyjątek potwierdza regułę. WZ: Mozemy zmienic A,B,C na a,b,c zeby bylo bardziej spojne z reszta pracy - inicjalnie uzywalem wielkich liter na rysunku i tutaj w dowodach bo sa bardziej rozroznialne (np. male a i d sa u mnie dosyc podobne na rysunkach).}

%\medskip
\noindent
%It can be easily verified that
%\begin{claim}
    Each length-4 square-free word
    $X$ is equal, after one-to-one renaming of letters, to
    one of the words in $\Gamma$.
    The mapping of letters assigns, in a left-to-right order, to each letter of $X$ that does not have a previous occurrence the first unused letter in $A,B,C,D$.
    For  example, $ADAB$ is "isomorphic"
    to  $ABAC$.
%\end{claim}
%Let us first reduce the space of all the possible words $Y$. 
Let $\Gamma'$ be the set of all length-6 square-free 
words $Y$ satisfying at least one of the following 
conditions:
\begin{enumerate}
    \item the length-5 prefix of $Y$ contains a letter outside $\{A,B,C\}$
    \item  $Y$ contains $CB$
    \item the length-5 prefix of $Y$ contains $AC$
    \item the length-4 prefix of $Y$ contains $BA$.
\end{enumerate}
\begin{claim}\label{cl: cutting in the word}
If $X\in \Gamma$ and $Y\in \Gamma'$, then $X$
matches $Y$.
%and $X'[0] \not\in \{Y'[0],Y'[1]\}$, $Y'[0] \not\in \{X'[0],X'[1]\}$.
\end{claim}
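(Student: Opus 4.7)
The plan is to go case by case through the sixteen pairs $(X,Y)$ with $X\in\Gamma$ and $Y\in\Gamma'$, exhibiting in each case a length-$2$ factor $a_1a_2$ of $X$ and $b_1b_2$ of $Y$ that satisfy the three inequalities encoded in \cref{fig:smile}, namely $a_1\notin\{b_1,b_2\}$ and $a_2\neq b_1$. Most sub-cases are one-line applications of \cref{April30}; only a few require a second look at the letters of $Y$.

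First I would handle $X=\lambda_2=ABCA$. By \cref{April30}(2), $\lambda_2$ matches each of $AC$, $BA$, $CB$, $AD$, $BD$, $CD$. Hence for conditions (2), (3), and (4) in the definition of $\Gamma'$, the factors $CB$, $AC$, $BA$ guaranteed to appear in $Y$ provide an immediate match; for condition (1), the distinguished letter $E\notin\{A,B,C\}$ sits in the length-$5$ prefix of $Y$ next to some letter $\ast$, and the argument of \cref{April30}(2) adapts verbatim with $E$ in place of $D$, letting one of $AB,BC,CA$ match the factor of $Y$ containing $E$. An analogous argument disposes of $X=\lambda_4=ABCD$: its factors $AB$, $BC$, $CD$ match $CB$, $AC$, $BA$ respectively (in the last pair note $a_2=D\neq B$), and condition (1) is handled in the same spirit by checking that at least one of $AB$, $BC$, $CD$ avoids the single forbidden neighbour of the offending letter in $Y$.

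The main obstacle will be the three remaining pairs: $\lambda_1=ABAC$ under conditions (3) and (4), and $\lambda_3=ABCB$ under condition (4). In each of these one checks that \emph{none} of the three length-$2$ factors of $X$ matches the factor of $Y$ directly supplied by the condition. The resolution is to exploit square-freeness and the placement of the special factor in a short prefix of $Y$, which guarantees the existence of a further letter $Y[i+2]$ (or $Y[i-1]$) and hence of an adjacent length-$2$ factor of $Y$ on which to attempt a match. For $X=\lambda_1$ under condition (3) the special factor $AC$ sits at positions $i,i+1$ with $i+2\le 5$ and $Y[i+2]\neq C$ by square-freeness; I would match the factor $C\cdot Y[i+2]$ by $BA$ if $Y[i+2]=A$, by $AB$ if $Y[i+2]=B$, and by either if $Y[i+2]\notin\{A,B,C\}$. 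The two remaining problematic pairs are handled symmetrically, by inspecting the neighbour on the other side of the distinguished factor and running through an analogous three-way sub-case.

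The one point worth flagging is that this resolution genuinely relies on the placement restrictions in $\Gamma'$ (condition (3) requires $AC$ in the length-$5$ prefix rather than merely in $Y$, and condition (4) requires $BA$ in the length-$4$ prefix), since otherwise the neighbouring letter of $Y$ on which the above sub-casing depends might not exist. Once these three pairs are resolved in this manner, no further cases remain and the claim follows.
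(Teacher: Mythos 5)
Your overall strategy -- a case analysis over the pairs $(X,Y)$ driven by the four defining conditions of $\Gamma'$, with \cref{April30} dispatching the easy sub-cases -- is the same as the paper's, and you correctly isolate the three genuinely problematic pairs ($\lambda_1$ under conditions (3) and (4), $\lambda_3$ under condition (4)). Your resolution of $\lambda_1=ABAC$ under condition (3) is exactly the paper's argument. However, the claim that the two condition-(4) pairs are ``handled symmetrically \dots running through an analogous three-way sub-case'' on a single neighbouring letter does not go through for $\lambda_1$. Write the distinguished factor as $BA=Y[i\dd i+1]$ with $i\le 2$. A left neighbour $Y[i-1]$ need not exist (e.g.\ $Y=BABCAB$ is square-free, lies in $\Gamma'$ only via condition (4), and has its $BA$ at position $0$). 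Looking one step to the right, the sub-case $Y[i+2]=B$ is not eliminated by square-freeness, and it supplies only the additional factor $AB$; but none of the three length-$2$ factors $AB$, $BA$, $AC$ of $\lambda_1$ matches either $BA$ or $AB$ (every candidate $U$ either has $a_1=A\in\{b_1,b_2\}$ or, for $U=BA$, has $b_1=A\in\{a_1,a_2\}$). So a single-step, three-way sub-case provably fails here.

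The missing idea is a \emph{two-step} extension: after reducing to $Y[i+2]=B$ (the cases $Y[i+2]=C$ and $Y[i+2]\notin\{A,B,C\}$ fold back into conditions (3) and (1)), one must look at $Y[i+3]$, which exists precisely because condition (4) restricts $BA$ to the length-$4$ prefix ($i+3\le 5$). Square-freeness forces $Y[i+3]\ne B$ and $Y[i+3]\ne A$ (else $BABA$), so $Y$ contains $BABC$ or $BABZ$ with $Z\notin\{A,B,C\}$, and then $AC$ (a factor of $\lambda_1$) matches $BC$ or $BZ$. Your closing remark does notice that the length-$4$ prefix restriction in condition (4) is load-bearing, but attributes it to the existence of one extra letter, whereas a length-$5$ restriction would already give that; the length-$4$ bound is what guarantees the \emph{two} extra letters this case needs. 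For $\lambda_3=ABCB$ your one-step argument does suffice (its factor $CB$ matches $AB$), so the gap is confined to the single pair $(\lambda_1,\text{condition (4)})$.
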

\begin{proof}[Proof of the claim] We consider each condition separately.
We find matching length-2 factors $U$ and $V$ of $X$ and $Y$, respectively.

\paragraph{\bf Condition 1}
 If $Y[i] \not\in \{A,B,C\}$ for $i<5$, then let $V=Y[i\dd i+1]$. For each $X \in \Gamma$, we will identify a corresponding length-2 factor $U$ such that $U$ matches $V$. If $Y[i+1]=A$, we can take as $U$ any of the two words $BA$, $BC$, since each word in $\Gamma$ contains one of $BA$, $BC$ as a factor. If $Y[i+1]\neq A$, then take $U=AB$.

\paragraph{\bf Condition 2} If $Y$ contains $V=CB$, we take $U=AB$. Each 
word in $\Gamma$  contains $AB$.

\paragraph{\bf Condition 3} If $Y$ contains $V=AC$, then it matches 
$\lambda_2,\lambda_3,\lambda_4$ since each of them contains 
$U=BC$.

Hence we can assume that $X=\lambda_1=ABAC$.
Denote by $Z$ the next letter after $C$ in $Y$.
If $Z=B$ then $V=CB$ was already considered in the previous condition.
If $Z\notin \{A,B,C\}$, then we can take $U=AB$, $V=CZ$.
Hence we can assume $Z=A$ and take $V=CA$, which matches $BA$ which is contained in $\lambda_1$.

\paragraph{\bf Condition 4} If $BA$ appears in the length-4 prefix of $Y$, 
then we can assume that $BA$ is followed by $BC$ or $BZ$, where $Z\notin 
\{A,B,C\}$ (otherwise we would have one of the previous conditions or a square in $Y$). 
Then   $Y$ contains $BABC$ or $BABZ$. 
Each $\lambda_i$ matches $BABC$ and  $BABZ$. Hence $X$ matches $Y$.
%
%\smallskip This completes the proof of the claim.
 \end{proof}

%It is now sufficient to show the following claim.
\begin{claim}\label{cl: cutting in the word2}
If $X\in \Gamma$ and $Y\notin \Gamma'$ is a square-free word,
then $X$ matches $Y$.
%and $X'[0] \not\in \{Y'[0],Y'[1]\}$, $Y'[0] \not\in \{X'[0],X'[1]\}$.
\end{claim}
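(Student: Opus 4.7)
My plan is to first pin down the shape of $Y$ using the negations of the defining conditions of $\Gamma'$ together with square-freeness, and then verify the matching for each of the (few) remaining combinations of $X \in \Gamma$ and $Y$. Writing $Y = y_1 y_2 y_3 y_4 y_5 y_6$, the transitions inside the length-4 prefix become deterministic: from $A$ one must go to $B$ (the alternatives $A$, $C$, or a letter outside $\{A,B,C\}$ are ruled out by square-freeness, by the absence of $AC$, and by condition~1, respectively), from $B$ one must go to $C$ (not $B$, not $A$ by the absence of $BA$), and from $C$ one must go to $A$ (not $C$, not $B$ by the absence of $CB$). Hence the length-4 prefix of $Y$ is $ABCA$, $BCAB$, or $CABC$ depending on $y_1$. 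Extending by the transition rule for $y_5$ (where $BA$ is now allowed) and for $y_6$ (where only condition~2 together with the full square-freeness of $Y$ restricts the next letter) reduces the candidate shapes of $Y$ to exactly four: $ABCABZ$, $BCABAZ$, $BCABCZ$, $CABCAZ$, with $Z = y_6$ lying in a small constrained set in each case.

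Next, I would observe that each such $Y$ contains both $AB$ and $BC$ as length-2 factors. This immediately settles three of the four cases by picking the distinctive length-2 factor of $X$: for $X = ABAC$, the factor $AC$ matches $BC$ of $Y$; for $X = ABCB$, the factor $CB$ matches $AB$ of $Y$; and for $X = ABCD$, the factor $CD$ matches $AB$ of $Y$. In each of these, one directly verifies the inequalities of \cref{fig:smile}, or alternatively invokes Observation~\ref{April30}(\ref{i1}).

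The main obstacle is the case $X = ABCA$, whose three length-2 factors $AB$, $BC$, $CA$ all already occur in any length-5 $\{A,B,C\}$-word of the four admissible shapes, so the ``cyclic'' body of $Y$ by itself offers no useful match. The plan for this case is to exploit the tail factor created by $Z = y_6$: for $Y = ABCABZ$, the factor $CA$ of $X$ matches $BZ$ (since $Z \neq C$); for $Y = BCABAZ$, the factor $CA$ matches $BA$; for $Y = BCABCZ$ one has $Z \notin \{A,B,C\}$, so $CZ$ is of the form $CD$ and Observation~\ref{April30}(\ref{i3}) applies; and for $Y = CABCAZ$, $Z$ lies in $\{C\} \cup (\Sigma \setminus \{A,B,C\})$, so $AZ \in \{AC, AD\}$ and Observation~\ref{April30}(\ref{i3}) applies once more. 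This exhausts all cases and yields the claim.
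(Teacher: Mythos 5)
Your proof is correct and follows essentially the same route as the paper's: reduce the length-4 prefix of $Y$ to one of the three rotations $ABCA,BCAB,CABC$, dispatch $\lambda_1,\lambda_3,\lambda_4$ via the factors $AB$ and $BC$ that $Y$ necessarily contains, and handle $X=ABCA$ by exploiting the position where the period-3 pattern of $Y$ must break. The only cosmetic difference is that you enumerate the four explicit shapes of $Y$ for the last case, whereas the paper phrases the same step as ``the period 3 is broken at position 5 or 6'' and feeds the resulting factor $V$ into \cref{April30}\eqref{i3}.
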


\begin{proof}%[{of the claim}]
The  
length-4 prefix $Y'$ of $Y$ contains only letters $A,B,C$
and contains none of  $BA,AC,CB,AA,BB,CC$ as its factor.
Hence 
%immediately after $A$ can only be $B$, after $B$ only $C$  and after $C$ %only $A$, consequently   
$$Y'\in \{ABCA,\,BCAB,\,CABC\}.$$
This implies that
each possible word $Y'$ contains  each of $AB,BC,CA$ as a factor.

Let us note that $AC$ in $\lambda_1$ matches $BC$ (see point \eqref{i1} of \cref{April30}), $CB$ in $\lambda_3$ matches $AB$, and $CD$ in $\lambda_4$ matches $AB$.
Hence, each $\lambda_i$, for  $i\ne 2$, matches one of $AB,BC,CA$. Consequently, if $i\ne 2$,
$X=\lambda_i$ matches $Y$ since $Y$ contains each of $AB,BC,CA$.
%\begin{itemize}
%\item If $X=\lambda_1=ABAC$ then $X$ is matched with CA, since $BA$ matches  %$CA$.\
%\item if $X=\lambda_3=ABCB$ then $X$ is matched with AB, since $CB$ matches %$AB$.
%\item If $X=\lambda_4=ABCD$ then  $X$  is matched with $AB$, since $CD$ %matches $AB$. 
%\end{itemize}
%

We are left with the case $X=\lambda_2=ABCA$. Observe that $Y'$   has period 3. We also know that $Y$ has length $6$ and is square-free.
Hence the period 3 of $Y'$ is broken at position $i=5$ or $i=6$. 
Let $V=Y[i-1]Y[i]$. 
We have that 
\[V\in \{AC,\,BA,\,CB\}\ \mbox{or}\ V\in \{A,B,C\}\cdot Z\ \text{where}\ Z\notin \{A,B,C\}.\]
%The condition $V\in \{AC,\,BA,\,CB\}$ is false, since $Y\notin \Gamma'$.
%Hence $V\in \{A,B,C\}\cdot D$.
 Then, due to point \eqref{i3} in 
\cref{April30}, $\lambda_2=ABCA$ matches $V$. Consequently, $X$
matches $Y$. This completes the proof of the claim.
\end{proof}
The thesis of \cref{lem:comb} follows from the three claims above.

\section{Conclusions and open problems}\label{sec:concl}
Subsequence covers are algorithmically and combinatorially more 
complicated than string covers. We solved several basic problems and left open several others. The main difficulty is the behaviour of the function 
$\gamma(k)$, which grows exponentially and in an irregular way.

\paragraph{Our combinatorial results}
These results are mainly related to the function $\gamma(k)$.

\begin{itemize}
\item Exact values of the function $\gamma$ for small arguments: $\gamma(1)=1$, $\gamma(2)=3$, $\gamma(3)=8$ (see \cref{fact}), $\gamma(4)=19$ (\cref{Kuba's mail}).
\item For $k \geq 4$, a recursive lower bound $\gamma(k) \ge 2\cdot\gamma(k-1)+1$ (\cref{thm:Tomek 29.11}) and a compact lower bound $\gamma(k) \ge 5\cdot 2^{k-2}-1$ (\cref{cor:Tomek 29.11}).
\item For $k \geq 4$, a recursive upper bound $\gamma(k)\le (2k-2)\gamma(k-1)+6$ (\cref{thm: Improved upper bound}) and a compact upper bound $\gamma(k) < k!\cdot 2^{k-1}$ (\cref{cor:asymp}).
\item For every $n>0$, a construction of a word of length $n$ over an alphabet of size $\cO(\log n)$ that has at least $2^{\lfloor\frac{n+1}{16}\rfloor}$ different shortest s-covers (\cref{thm:useless}).
\end{itemize}

%\{\textcolor{red}{\cref{tab:results}

\paragraph{Our algorithmic results}
\cref{tab:results} summarizes our algorithmic results, where $n$ denotes the length of the input string and $k$ the size of alphabet.

\begin{table}[h]
    \centering \footnotesize
    \begin{tabular}{c|c|c}
         \textbf{Problem} & \textbf{Complexity} & \textbf{Note} 
         \\[2mm]\hline\hline\rule{0pt}{4ex}  
         Testing if a single candidate is an s-cover & $\cO(n)$ & \cref{lem:checking_alg} \\[2mm]\hline\rule{0pt}{4ex} 
         Finding a shortest s-cover & $\cO(n\cdot\min(2^n,(k-1)^{\gamma(k)}))$ & \multirow{3}{*}{\rule{0pt}{8ex} \cref{thm:algo}} \\[2mm]\cline{1-2}\rule{0pt}{4ex} 
         Finding any s-cover & $\cO(n+2^{\gamma(k)}\gamma(k))$ & \\[2mm]\cline{1-2}\rule{0pt}{4ex} 
         Finding an s-cover of length at most $\gamma(k)$ & $\cO(n2^{\gamma(k)}\gamma(k))$ & \\
    \end{tabular}
    \caption{ %The table shows our algorithmic results, where $n$ denotes the length of the input string and $\gamma(k)$ denotes the length of a longest s-primitive word over an alphabet of size $k$.
    }
    \label{tab:results}
\end{table}

\paragraph{Open problems}
There are several natural algorithmic and combinatorial questions:
\begin{itemize}
    \item What is the complexity of checking if  a given word is s-primitive?
    We do not even know if it is in NP. We know it is in co-NP, because we can  guess  a witness (a non-trivial s-cover) and test it in linear time. 
    \item What is the complexity of computing the shortest s-cover of a  word?
    \item What is the complexity of computing  the number of different s-covers of a word?
    \item What is the exact value of $\gamma(5)$? We know only that $39\le \gamma(5)\le 158$.
    \item Let us define $\gamma'(1)=1,\ \gamma'(k+1)=2\,\gamma'(k)+k$ for $k>1$.
     We have
    $\gamma(k)=\gamma'(k)$ for $1\le k<5$. Is it always true?
    \item Is there a short and computer-free proof 
    that
     $abacadbabdcabcbadac$ is s-primitive?
\end{itemize}

We conjecture that the first three (algorithmic) problems are NP-hard for general alphabets.
The last three  (combinatorial) problems seem to be elusive.

\subsection*{Acknowledgements} We thank Juliusz Straszy\'nski for his help in conducting computer experiments.

\bibliographystyle{elsarticle-num}
\bibliography{references}

\begin{thebibliography}{10}
\expandafter\ifx\csname url\endcsname\relax
  \def\url#1{\texttt{#1}}\fi
\expandafter\ifx\csname urlprefix\endcsname\relax\def\urlprefix{URL }\fi
\expandafter\ifx\csname href\endcsname\relax
  \def\href#1#2{#2} \def\path#1{#1}\fi

\bibitem{DBLP:journals/ipl/ApostolicoFI91}
A.~Apostolico, M.~Farach, C.~S. Iliopoulos, Optimal superprimitivity testing
  for strings, Inf. Process. Lett. 39~(1) (1991) 17--20.
\newblock \href {https://doi.org/10.1016/0020-0190(91)90056-N}
  {\path{doi:10.1016/0020-0190(91)90056-N}}.

\bibitem{DBLP:journals/ipl/Breslauer92}
D.~Breslauer, An on-line string superprimitivity test, Inf. Process. Lett.
  44~(6) (1992) 345--347.
\newblock \href {https://doi.org/10.1016/0020-0190(92)90111-8}
  {\path{doi:10.1016/0020-0190(92)90111-8}}.

\bibitem{DBLP:journals/ipl/MooreS95}
D.~W.~G. Moore, W.~F. Smyth, A correction to "{A}n optimal algorithm to compute
  all the covers of a string", Inf. Process. Lett. 54~(2) (1995) 101--103.
\newblock \href {https://doi.org/10.1016/0020-0190(94)00235-Q}
  {\path{doi:10.1016/0020-0190(94)00235-Q}}.

\bibitem{DBLP:journals/tcs/CzajkaR21}
P.~Czajka, J.~Radoszewski, Experimental evaluation of algorithms for computing
  quasiperiods, Theor. Comput. Sci. 854 (2021) 17--29.
\newblock \href {https://doi.org/10.1016/j.tcs.2020.11.033}
  {\path{doi:10.1016/j.tcs.2020.11.033}}.

\bibitem{DBLP:journals/fuin/MhaskarS22}
N.~Mhaskar, W.~F. Smyth, String covering: {A} survey, Fundam. Informaticae
  190~(1) (2022) 17--45.
\newblock \href {https://doi.org/10.3233/FI-222164}
  {\path{doi:10.3233/FI-222164}}.

\bibitem{DBLP:journals/jcss/WarmuthH84}
M.~K. Warmuth, D.~Haussler, On the complexity of iterated shuffle, J. Comput.
  Syst. Sci. 28~(3) (1984) 345--358.
\newblock \href {https://doi.org/10.1016/0022-0000(84)90018-7}
  {\path{doi:10.1016/0022-0000(84)90018-7}}.

\bibitem{DBLP:conf/csr/RizziV13}
R.~Rizzi, S.~Vialette, On recognising words that are squares for the shuffle
  product, Theor. Comput. Sci. 956 (2023) 111156.
\newblock \href {https://doi.org/10.1016/J.TCS.2017.04.003}
  {\path{doi:10.1016/J.TCS.2017.04.003}}.

\bibitem{DBLP:journals/jcss/BussS14}
S.~Buss, M.~Soltys, Unshuffling a square is {NP}-hard, J. Comput. Syst. Sci.
  80~(4) (2014) 766--776.
\newblock \href {https://doi.org/10.1016/j.jcss.2013.11.002}
  {\path{doi:10.1016/j.jcss.2013.11.002}}.

\bibitem{lothaire_2002}
M.~Lothaire, Algebraic Combinatorics on Words, Encyclopedia of Mathematics and
  its Applications, Cambridge University Press, 2002.
\newblock \href {https://doi.org/10.1017/CBO9781107326019}
  {\path{doi:10.1017/CBO9781107326019}}.

\bibitem{DBLP:conf/cpm/KolpakovPPK14}
R.~Kolpakov, M.~Podolskiy, M.~Posypkin, N.~Khrapov, Searching of gapped repeats
  and subrepetitions in a word, J. Discrete Algorithms 46-47 (2017) 1--15.
\newblock \href {https://doi.org/10.1016/J.JDA.2017.10.004}
  {\path{doi:10.1016/J.JDA.2017.10.004}}.

\bibitem{DBLP:journals/tcs/BulteauV20}
L.~Bulteau, S.~Vialette, Recognizing binary shuffle squares is {NP}-hard,
  Theor. Comput. Sci. 806 (2020) 116--132.
\newblock \href {https://doi.org/10.1016/j.tcs.2019.01.012}
  {\path{doi:10.1016/j.tcs.2019.01.012}}.

\bibitem{DBLP:journals/tcs/FlouriIKPPST13}
T.~Flouri, C.~S. Iliopoulos, T.~Kociumaka, S.~P. Pissis, S.~J. Puglisi, W.~F.
  Smyth, W.~Tyczyński, Enhanced string covering, Theor. Comput. Sci. 506
  (2013) 102--114.
\newblock \href {https://doi.org/10.1016/j.tcs.2013.08.013}
  {\path{doi:10.1016/j.tcs.2013.08.013}}.

\bibitem{DBLP:journals/algorithmica/KociumakaPRRW15}
T.~Kociumaka, S.~P. Pissis, J.~Radoszewski, W.~Rytter, T.~Wale\'n, Fast
  algorithm for partial covers in words, Algorithmica 73~(1) (2015) 217--233.
\newblock \href {https://doi.org/10.1007/s00453-014-9915-3}
  {\path{doi:10.1007/s00453-014-9915-3}}.

\bibitem{DBLP:conf/esa/Radoszewski23}
J.~Radoszewski, Linear time construction of cover suffix tree and applications,
  in: 31st Annual European Symposium on Algorithms, {ESA} 2023, Vol. 274 of
  LIPIcs, Schloss Dagstuhl - Leibniz-Zentrum f{\"{u}}r Informatik, 2023, pp.
  89:1--89:17.
\newblock \href {https://doi.org/10.4230/LIPICS.ESA.2023.89}
  {\path{doi:10.4230/LIPICS.ESA.2023.89}}.

\bibitem{DBLP:conf/spire/Charalampopoulos22}
P.~Charalampopoulos, S.~P. Pissis, J.~Radoszewski, W.~Rytter, T.~Waleń,
  W.~Zuba, Subsequence covers of words, in: String Processing and Information
  Retrieval - 29th International Symposium, {SPIRE} 2022, Vol. 13617 of Lecture
  Notes in Computer Science, Springer, 2022, pp. 3--15.
\newblock \href {https://doi.org/10.1007/978-3-031-20643-6\_1}
  {\path{doi:10.1007/978-3-031-20643-6\_1}}.

\bibitem{DBLP:journals/mst/Shur14}
A.~M. Shur, Growth of power-free languages over large alphabets, Theory Comput.
  Syst. 54~(2) (2014) 224--243.
\newblock \href {https://doi.org/10.1007/S00224-013-9512-X}
  {\path{doi:10.1007/S00224-013-9512-X}}.

\end{thebibliography}

\end{document}